\def\Z{{\cal Z}}
\def\I{{\cal I}}
\def\1{\mathrm{Id}}
\def\Zs{{\check {\cal Z}}}
\def\Ss{{\check S}}
\def\Sigmas{{\check \Sigma}}
\def\Us{{\check U}}
\newcommand{\R}{\mathbb{R}}
\newtheorem{thm}{Theorem}[section]
\newtheorem{cor}[thm]{Corollary}
\newtheorem{lem}[thm]{Lemma}
\newtheorem{prop}[thm]{Proposition}
\numberwithin{equation}{section}
\title{\large{{\bf On the eigenvalues of a central configuration}}}
\author{}
\date{}
\begin{document}
	
	\maketitle

\centerline{Alain Albouy$^1$ and Jiexin Sun$^{1,2}$}
\bigskip
\centerline{$^1$ LTE, Observatoire de Paris, CNRS}
\centerline{77 avenue Denfert-Rochereau}
\centerline{F-75014 Paris}
\bigskip
\centerline{$^2$ Department of Mathematics, Shandong University}
\centerline{27 Shanda Nanlu}
\centerline{250100 Jinan, Shandong, P. R. China}

\bigskip
\bigskip

{\bf Abstract.} The equations of the Newtonian $n$-body problem have a matrix form, where an $n\times n$ matrix depending on the masses and on the mutual distances appears as a factor. The $n$ eigenvalues of this matrix are real and nonnegative. In a motion of relative equilibrium, the configuration, called {\it central}, has constant mutual distances. The matrix is constant. We prove that in a relative equilibrium of 5 bodies the two nontrivial eigenvalues are strictly greater than the three trivial ones. This result improves published inequalities about the central configurations, which belong to two independent lines of research. One starts with Williams in 1938 and concerns constraints on the shape of the configuration. The other concerns the Hessian of the potential and its index, and applies to the linear stability of the self-similar motions and to the possible bifurcations.
We also considerably clarify the very useful identities with which Williams discusses his inequalities.

\bigskip

\section{Introduction}

\bigskip

{\bf The Brehm-Wintner-Conley matrix.} Let us write the Newtonian equations of the $n$-body problem in a matrix form. We introduce the Brehm-Wintner-Conley matrix:
$$\Z=
\begin{pmatrix}
\Sigma_1& -m_1 S_{12}&-m_1 S_{13}&\dots& -m_1S_{1n}\cr -m_2S_{12}&\Sigma_2&-m_2S_{23}&\dots &-m_2S_{2n}\cr -m_3 S_{13}& -m_3 S_{23}&\Sigma_3&\dots& -m_3S_{3n}\cr \dots&\dots&\dots&\dots&\dots
\cr
-m_n S_{1n}&-m_n S_{2n}&-m_n S_{3n}&\dots&\Sigma_n
\end{pmatrix},$$
where $m_i>0$ is the mass of body $i$, $i=1,\dots,n$, whose position is the vector $\mathrm{q}_i=(x_i,y_i)\in\R^2$, where $$S_{ij}=\frac{1}{r_{ij}^3},\quad r_{ij}=\sqrt{(x_i-x_j)^2+(y_i-y_j)^2}.$$
Here we restrict our attention to the planar case. The extension to the spatial case is obvious.
To define the $\Sigma_i$'s and to write the Newtonian equations we introduce three vectors of $\R^n$:
$$
U=(1,\dots,1),\quad X=(x_1,\dots,x_n), \quad Y=(y_1,\dots,y_n).
$$
The $\Sigma_i$'s are uniquely defined by the condition $0=U\Z$,
while the Newtonian equations of motion are
\begin{equation}\label{EQn}\ddot X=-X\Z, \quad\ddot Y=-Y\Z.
\end{equation}
Let us give examples of matrix computation. A fundamental remark is that $\Z\mu$ is symmetric, where
$$\mu=\begin{pmatrix}
m_1& 0& 0&\dots& 0\cr 0&m_2&0&\dots &0\cr 0& 0&m_3&\dots& 0\cr \dots&\dots&\dots&\dots&\dots
\cr
0&0&0&\dots&m_n
\end{pmatrix}.$$
It is easy to see that this symmetry expresses the familiar {\it law of action and reaction}. In particular, it gives $\Z\mu U^t=0$, which implies $\ddot X \mu U^t=\ddot Y \mu U^t=0$. We recognize another familiar law: {\it the acceleration of the center of mass is zero.}

\bigskip

{\bf Central configurations.} A configuration $(X,Y)$ is called central if there is a $\lambda$ such that $(\ddot X,\ddot Y)=-\lambda (X,Y)$, that is
\begin{equation}\label{EQcc}
\lambda X=X\Z, \quad\lambda Y=Y\Z.
\end{equation}
In words, $X$ and $Y$ are two eigenvectors of $\Z$ with same eigenvalue $\lambda$. We will always assume that the configuration $(X,Y)$ is noncollinear, which means that $X$ and $Y$ are independent vectors. Another eigenvector is $U$. Its eigenvalue is zero. If $V=(v_1,\dots,v_n)$, then $V\Z\mu V^t=\sum_{i<j}m_im_jS_{ij}(v_i-v_j)^2$. Consequently $\Z\mu$ is positive semidefinite and $U$ generates the kernel. The eigenvalues of $\Z$ are real and nonnegative. Note that any eigenvector $V_i$ different from $U$ has its center of mass at zero, as proved by multiplying $V_i\Z=\lambda_i V_i$ by $\mu U^t$. Indeed, the eigenspaces are mutually orthogonal for the metric defined by $\mu$. If we also want the orthogonality of the eigenvectors $X$ and $Y$, that is, $X\mu Y^t=0$, it is enough to choose the frame $\mathrm{Oxy}$ along the principal axes of inertia of the configuration.

\bigskip

{\bf Relative equilibria.} If $(X_0,Y_0)$ satisfies (\ref{EQcc}), and if $\omega=\sqrt{\lambda}$, then the motion $t\mapsto (X_0\cos\omega t,Y_0\sin\omega t)$ satisfies (\ref{EQn}). It is a motion of relative equilibrium.

\bigskip

{\bf  Results.} The first reasonings about central configurations involving the matrix $\Z$ are due to Brehm \cite{Bre} in 1908. See his equation (14). The comments by Meyer \cite{Mey} (1933) and Wintner \cite{Win} (1941) are reported in Albouy-Fernandes \cite{AF}. The first result about the nontrivial eigenvalues is due to Conley (see \cite{Pac}): in a collinear central configuration, these eigenvalues are strictly greater than the simple eigenvalue $\lambda$. Moeckel \cite{Moe} obtained a second result by extending a statement due to Pacella \cite{Pac}: the arithmetic mean of the nonzero eigenvalues of $\Z$ is greater than $\lambda$. This inequality is true whatever the dimension of the central configuration, and is strict except if the configuration is a regular simplex (see  \cite{AF}). Moeckel \cite{Moe} also proved that Conley's statement does not extend to all the central configurations, by showing a planar example with 474 bodies with an eigenvalue smaller than $\lambda$. The present work is devoted to proving:

\bigskip

\begin{thm} \label{mthm}
In a planar noncollinear central configuration of 5 bodies, 0 is a simple eigenvalue of $\Z$, the multiplier $\lambda>0$ is a double eigenvalue. The two other eigenvalues are strictly greater than $\lambda$.
\end{thm}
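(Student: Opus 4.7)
First I would record the easy spectral facts. The zero eigenvalue is simple because $\Z\mu$ is positive semidefinite with kernel $\langle U\rangle$. The multiplier $\lambda$ has multiplicity at least $2$ via the linearly independent eigenvectors $X,Y$, which we may take $\mu$-orthogonal by choosing the frame along principal axes of inertia. Because the map $V\mapsto V\Z$ is self-adjoint in the $\mu$-inner product, the $2$-dimensional $\mu$-orthogonal complement $N$ of $\langle U,X,Y\rangle$ is $\Z$-invariant, and the two remaining eigenvalues $\lambda_4,\lambda_5$ are those of $\Z|_N$. So the theorem reduces to the strict inequalities $\lambda_4,\lambda_5>\lambda$; these automatically force $\lambda$ to have multiplicity exactly $2$.

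Next, by the Pacella--Moeckel inequality (strict here, since a planar noncollinear $5$-body configuration is not a regular simplex), $\lambda_4+\lambda_5>2\lambda$. It therefore suffices to prove $(\lambda_4-\lambda)(\lambda_5-\lambda)>0$: combined with the positive sum this forces both eigenvalues above $\lambda$. Equivalently, the $2\times 2$ operator $\Z|_N-\lambda I$ must have strictly positive determinant.

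To compute this determinant I would exhibit a basis of $N$ via Veronese test vectors with $i$-th components $x_i^2$, $x_iy_i$, $y_i^2$, projected $\mu$-orthogonally into $N$. For $5$ planar points in general position these span a $3$-dimensional subspace of $\R^5$ whose intersection with $\langle U,X,Y\rangle$ is $1$-dimensional (the unique conic through the points), so the projection onto $N$ surjects; degenerate positions can be treated by perturbation. Williams-type identities then give the $\Z$-action: combining $X\Z=\lambda X$, $Y\Z=\lambda Y$ with the algebra $x_i^2-x_k^2=2x_i(x_i-x_k)-(x_i-x_k)^2$ yields
$$V\Z=2\lambda V-U',\qquad V_i=x_i^2+y_i^2,\quad U'_i=\sum_{k\neq i}m_k/r_{ik},$$
together with analogous identities for $V_i=x_i^2-y_i^2$ and $V_i=2x_iy_i$. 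Assembling these gives an explicit $2\times 2$ matrix for $\Z|_N-\lambda I$ in terms of the masses and the mutual distances.

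The hard step will be proving that this $2\times 2$ determinant is strictly positive. It cannot hold for arbitrary $n$, since Moeckel's $474$-body example produces a sub-$\lambda$ eigenvalue. The proof must therefore exploit structure specific to $n=5$ --- plausibly the fact that $\dim N=2$ matches the dimension of traceless symmetric quadratic polynomials on $\R^2$ --- together with the clarified Williams identities announced in the abstract. I expect the technical heart to be a careful rearrangement of sums over the $\binom{5}{2}=10$ pairs, using the central-configuration equations as substitutions, to display the determinant as a manifestly positive expression in the $m_i$ and $r_{ij}$.
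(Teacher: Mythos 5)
Your opening reduction is exactly the one the paper uses: $0$ is simple because $\Z\mu$ is positive semidefinite with kernel $\langle U\rangle$; $\lambda$ has multiplicity at least $2$ via $X,Y$; the remaining two eigenvalues are $\lambda+\nu_1,\lambda+\nu_2$ where $\nu_1,\nu_2$ are the nonzero eigenvalues of the shifted matrix $\Zs=\Z-\lambda\I$ acting on the $\mu$-orthogonal complement of $\langle U,X,Y\rangle$; Moeckel's strict trace inequality gives $\nu_1+\nu_2>0$; so everything reduces to $\nu_1\nu_2>0$. That part is correct and is precisely how Section 6 of the paper begins.

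The gap is that $\nu_1\nu_2>0$ (Williams' inequality) \emph{is} the theorem, and your plan for it is a hope rather than an argument. Your Veronese identity $V\Z=2\lambda V-U'$ for $V_i=|\mathrm{q}_i|^2$ is correct, but the analogous computations for $x_i^2-y_i^2$ and $2x_iy_i$ do not produce anything sign-definite, and there is no ``manifestly positive expression'' for the determinant: its positivity genuinely depends on the geometry of the configuration, not just on algebra in the $m_i$ and $r_{ij}$. The paper instead expresses $\nu_1\nu_2$ through the identities $\Ss_{ik}\Ss_{jl}-\Ss_{il}\Ss_{jk}=\nu\, m_h\Delta_{ijh}\Delta_{klh}$ (the $2\times2$ minors of $\Zs\mu$ written via Lemma \ref{lem1} in terms of oriented triangle areas), so that $\nu>0$ becomes a statement about the signs of the quantities $\Ss_{ik}\Ss_{jl}-\Ss_{il}\Ss_{jk}$ versus the signs of products of areas. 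Proving those sign relations requires a case split on the shape of the convex hull (pentagon, quadrilateral, triangle), the Chen--Hsiao theorem for the strictly convex case, and, for the concave cases, a sequence of distance inequalities derived from the perpendicular bisector theorem and a ``disk sector theorem,'' assembled into a proof by contradiction. None of this is reachable by ``a careful rearrangement of sums'': without the geometric input one cannot even determine which $\Ss_{ij}$ are positive and which are negative, and that sign information is what drives the contradiction. So the proposal establishes the (correct, and shared) reduction but does not prove the theorem.
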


This result was expected and conjectured (see \cite{Saa80}, \cite{Moe}, \cite{AF}). What was not expected is that it also solves a conjecture left by Williams \cite{Wil} in 1938. Williams proved relations between the areas of the triangles formed by 3 of the 5 bodies and the quantities $(S_{ik}-\lambda/M)(S_{jl}-\lambda/M)-(S_{il}-\lambda/M)(S_{jk}-\lambda/M)$ where
\begin{equation}\label{EQM}M=m_1+\cdots+m_5.
\end{equation}
The obvious relation of these quantities with the $2\times 2$ minors of $\Z$ has remained unnoticed. Williams then tried to deduce the signs of these quantities. However, his arguments rely on wrong postulates from which he also deduces the impossibility of certain shapes for a central configuration, which are indeed possible\footnote{In \cite{Wil} possible central configurations are wrongly discarded at top of page 566, then, later, through equation (16), which according to \cite{CH} has exceptions, then in Theorem 10.2, in section 12 and in example 2, page 579.  See \cite{Lon} about a related mistake in the 4-body problem.}. Chen and Hsiao \cite{CH} revised his study and proved his main conclusions for the strictly convex configurations. Our result gives all the signs in all the remaining cases. These signs agree with Williams' claims.

We propose to call the eigenvalues of the matrix $\Z$ at a central configuration the {\it eigenvalues of the central configuration}. The vertical oscillations near a self-similar motion are determined by the eigenvalues of the central configuration (see \cite{HOT}).

\bigskip

\section{The areas of the triangles in terms of covectors}

\bigskip

\begin{lem}\label{lem1}
 Consider five points $\mathrm{q}_i=(x_i,y_i)\in\R^2$, $i=1,\dots,5$, not on a same line. Let $\Delta_{hij}=(\mathrm{q}_i-\mathrm{q}_h)\wedge (\mathrm{q}_j-\mathrm{q}_h)$ be twice the oriented area of the triangle $\mathrm{q}_h\mathrm{q}_i\mathrm{q}_j$. Let $\Phi=(\Phi_1,\dots,\Phi_5)\in\R^5$, $\Psi=(\Psi_1,\dots,\Psi_5)\in\R^5$ be independent and satisfy
\begin{equation}\label{EQo}
\sum_{i=1}^5 \Phi_i=\sum_{i=1}^5 \Phi_i x_i=\sum_{i=1}^5 \Phi_i y_i=0,\qquad \sum_{i=1}^5 \Psi_i=\sum_{i=1}^5 \Psi_i x_i=\sum_{i=1}^5 \Psi_i y_i=0.
\end{equation}
Then there is a nonzero $a\in\R$ such that for any even permutation $(1,2,3,4,5)\mapsto(h,i,j,k,l)$
$\Delta_{hij}=a(\Phi_k\Psi_l-\Phi_l\Psi_k)$.

\end{lem}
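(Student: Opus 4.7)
The plan is to interpret both sides of the claimed identity as Plücker coordinates of two orthogonal complementary subspaces of $\R^5$, and then to invoke Hodge duality. First I would rewrite
$$\Delta_{hij}=\det\begin{pmatrix}1&1&1\\ x_h&x_i&x_j\\ y_h&y_i&y_j\end{pmatrix},$$
so the numbers $\Delta_{hij}$ (for $h<i<j$) are exactly the $3\times 3$ minors of the matrix with rows $U,X,Y$, that is, the coefficients of the 3-vector $U\wedge X\wedge Y\in\Lambda^3\R^5$ in the standard basis. In the same way the numbers $\Phi_k\Psi_l-\Phi_l\Psi_k$ are the coefficients of $\Phi\wedge\Psi\in\Lambda^2\R^5$.

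\textbf{Orthogonality structure.} The noncollinearity hypothesis makes $U,X,Y$ linearly independent, so $V:=\mathrm{span}(U,X,Y)\subset\R^5$ is 3-dimensional and $V^\perp$ (for the standard Euclidean structure on $\R^5$) is 2-dimensional. Conditions (\ref{EQo}) state precisely that $\Phi,\Psi\in V^\perp$, and since these two vectors are independent, $W:=\mathrm{span}(\Phi,\Psi)$ coincides with $V^\perp$.

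\textbf{Hodge duality and conclusion.} The Hodge star sends the line $\Lambda^3 V\subset\Lambda^3\R^5$ isomorphically onto the line $\Lambda^2 V^\perp=\Lambda^2 W\subset\Lambda^2\R^5$. Both $U\wedge X\wedge Y$ and $\Phi\wedge\Psi$ are nonzero elements of these respective lines, hence there is a unique nonzero constant $c$ with
$$\Phi\wedge\Psi=c\,\star(U\wedge X\wedge Y).$$
Applying $\star(e_h\wedge e_i\wedge e_j)=\mathrm{sgn}(h,i,j,k,l)\,e_k\wedge e_l$ and equating coefficients of each $e_k\wedge e_l$ yields $\Phi_k\Psi_l-\Phi_l\Psi_k=c\cdot\mathrm{sgn}(h,i,j,k,l)\,\Delta_{hij}$. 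Setting $a=1/c$ and restricting to \emph{even} permutations, for which the sign is $+1$, recovers the lemma.

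\textbf{Where care is needed.} The only nontrivial verification is the sign bookkeeping: making sure the convention for $\star$ matches the parity convention of the statement. I would check it on the sample identity $\Delta_{123}=a(\Phi_4\Psi_5-\Phi_5\Psi_4)$ coming from the identity permutation, either by a short direct computation or by solving (\ref{EQo}) for a basis $\Phi,\Psi$ in a concrete generic configuration; once one instance is confirmed, the antisymmetries of $\Delta_{hij}$ in $(h,i,j)$ and of $\Phi_k\Psi_l-\Phi_l\Psi_k$ in $(k,l)$ propagate the identity to every other even permutation. Finally, $a\ne 0$ is automatic, since noncollinearity gives $U\wedge X\wedge Y\ne 0$ and the Hodge star is an isomorphism.
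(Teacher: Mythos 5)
Your proof is correct and follows essentially the same route as the paper: both arguments identify the $\Delta_{hij}$ as the Plücker coordinates of $U\wedge X\wedge Y$ and the $\Phi_k\Psi_l-\Phi_l\Psi_k$ as those of $\Phi\wedge\Psi$, and then match the two decomposable elements up to a nonzero scalar via the duality between $\Lambda^3\R^5$ and $\Lambda^2$ of the annihilator/orthogonal complement. The only cosmetic difference is that you realize this duality as the Hodge star for the standard inner product, whereas the paper uses the interior product with the volume covector and a dual basis, deliberately postponing any choice of inner product (which it later takes to be the mass metric when evaluating $a$ explicitly).
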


\bigskip

{\bf Notation.} The sums in (\ref{EQo}) will be considered as duality brackets between the ``vectors''
\begin{equation}\label{EQa}
U=(1,1,1,1,1),\quad X=(x_1,x_2,x_3,x_4,x_5), \quad Y=(y_1,y_2,y_3,y_4,y_5)
\end{equation}
and the ``covectors'' $\Phi$ and $\Psi$.  Using the duality bracket (\ref{EQo}) is written $\langle \Phi,U\rangle=\langle \Phi,X\rangle=\langle \Phi,Y\rangle=0$, etc. We do not want to specify an inner product at this point. We denote the five vectors of the standard base by
$$e_1=(1,0,0,0,0),\quad e_2=(0,1,0,0,0), \quad\dots,\quad e_5=(0,0,0,0,1).$$ The dual base is formed by the five covectors
$$e_1^*=(1,0,0,0,0),\quad e_2^*=(0,1,0,0,0), \quad\dots,\quad e_5^*=(0,0,0,0,1).$$
We also introduce this alternative notation for  twice the oriented area of a triangle
\begin{equation}\label{EQAA}
\Delta^{kl}=\Delta_{hij}\quad\hbox{for any even permutation }(1,2,3,4,5)\mapsto(h,i,j,k,l).
\end{equation}
With this notation, the identity to prove is $\Delta^{kl}=a(\Phi_k\Psi_l-\Phi_l\Psi_k)$.

\bigskip

\begin{proof} We have $$U\wedge X\wedge Y=\sum_{h<i<j}\Delta_{hij}e_h\wedge e_i\wedge e_j.$$
Then
\begin{equation}\label{EQb}
Y\wedge X\wedge U\rfloor e_1^*\wedge e_2^*\wedge e_3^*\wedge e_4^*\wedge e_5^*=\sum_{k<l}\Delta^{kl} e_k^*\wedge e_l^*.\end{equation}
Here $\rfloor$ is the interior product. The interior product of $Y\wedge X\wedge U$ is also the interior product of $U$, followed by the interior product of $X$, followed by the interior product of $Y$.  Let $V$ and $W$ be vectors satisfying $\langle \Phi,V\rangle=1$, $\langle \Phi,W\rangle=0$, $\langle \Psi,V\rangle=0$, $\langle \Psi,W\rangle=1$. Then $(U,X,Y,V,W)$ is a base. The dual base $(U^*,X^*,Y^*,V^*,W^*)$ satisfies $V^*=\Phi$ and $W^*=\Psi$. Then (\ref{EQb}) is
$$Y\wedge X\wedge U\rfloor U^*\wedge X^*\wedge Y^*\wedge V^*\wedge W^*=\Phi\wedge\Psi$$
multiplied by a real factor $a$. But 
$\Phi\wedge\Psi=\sum_{k<l} (\Phi_k\Psi_l-\Phi_l\Psi_k)e_k^*\wedge e_l^*$. 

\end{proof}

\bigskip

{\bf An expression of the real factor in Lemma \ref{lem1}.} We assume that $\R^5$ is endowed with a Euclidean form, not necessarily the standard one. The duality bracket becomes an inner product after identifying vectors and covectors through the Euclidean form. We assume that the covectors $\Phi$ and $\Psi$ satisfy $\|\Phi\|=\|\Psi\|=1$ and $\langle \Phi,\Psi\rangle=0$. Then the factor $a$ in Lemma \ref{lem1} is:
\begin{equation}\label{EQab}a=\pm{ \|U\wedge X\wedge Y\|}\cdot{\|e_1^*\wedge e_2^*\wedge e_3^*\wedge e_4^*\wedge e_5^*\|}.
\end{equation}

\begin{proof} The vectors associated through the Euclidean form to the covectors $\Phi$ and $\Psi$ satisfy the conditions for $V$ and $W$ in the previous proof. Furthermore they are orthogonal unit vectors, also orthogonal to $U$, $X$ and $Y$. We have
$$\|U\wedge X\wedge Y\wedge V\wedge W\|^2=\|U\wedge X\wedge Y\|^2\|V\|^2 \|W\|^2=\|U\wedge X\wedge Y\|^2,$$
$$\|U^*\wedge X^*\wedge Y^*\wedge V^*\wedge W^*\|^2=\|U\wedge X\wedge Y\|^{-2}.$$
But $a U^*\wedge X^*\wedge Y^*\wedge V^*\wedge W^*=e_1^*\wedge e_2^*\wedge e_3^*\wedge e_4^*\wedge e_5^*$ so we get the formula for $a$. 

\end{proof}

\bigskip

{\bf Case of the mass inner product.} In what follows the Euclidean inner product will be defined by the masses. Choosing the orthonormal frame of $\mathrm{Oxy}$ which is centered at the center of mass of the configuration and directed toward the principal axes, we will have $\langle U,X\rangle=\sum m_i x_i=0$, $\langle U,Y\rangle=\sum m_i y_i=0$, $\langle X,Y\rangle=\sum m_i x_iy_i=0$ and consequently
\begin{equation}\label{EQw}
\|U\wedge X\wedge Y\|^2=\|U\|^2\|X\|^2\|Y\|^2=M\Bigl(\sum_{i=1}^5 m_i x_i^2\Bigr)\Bigl(\sum_{i=1}^5 m_i y_i^2\Bigr).
\end{equation}

\bigskip

\section{Shifted BWC matrix and minor determinants}

\bigskip

Although the material of this section easily passes to other numbers of bodies and other dimensions of the configuration, we continue with the restrictions $n=5$ and dimension 2.

\bigskip
{\bf Translation.} The matrix
$$\I=\frac{1}{M}\begin{pmatrix}M-m_1& -m_1&-m_1&-m_1& -m_1\cr -m_2&M-m_2&-m_2&-m_2&-m_2\cr -m_3& -m_3&M-m_3& -m_3& -m_3\cr -m_4 &-m_4 &-m_4 &M-m_4&-m_4 \cr
-m_5 &-m_5 &-m_5 &-m_5 &M-m_5\end{pmatrix}$$
is such that $X\I=(x_1-x_G,x_2-x_G,x_3-x_G,x_4-x_G,x_5-x_G)$ where $x_G$ is defined by $Mx_G=m_1x_1+\cdots +m_5x_5$. If we replace the definition (\ref{EQcc}) of the central configurations by
\begin{equation}\label{EQcct}
\lambda X\I=X\Z, \quad\lambda Y\I=Y\Z,
\end{equation}
then a translated central configuration is now also a central configuration. The center of mass of a configuration solution of (\ref{EQcc}) with $\lambda\neq 0$ should be at the origin, while this is not true for a solution of (\ref{EQcct}). Except for this, the solutions are the same.

\bigskip

{\bf Shifted BWC matrix.} With the above translation choice the central configurations are now defined by the equations
\begin{equation}\label{EQccs}
0=X\Zs, \quad 0=Y\Zs,
\end{equation}
where $\Zs=\Z-\lambda\I$ is the {\it shifted Brehm-Wintner-Conley matrix}, which also satisfies the condition
\begin{equation}\label{EQk}
0=U\Zs.
\end{equation}

\bigskip

{\bf Normalization and amended force.} We normalize the size of a central configuration by the condition $\lambda=M$, where $\lambda$ and $M$ are defined in (\ref{EQcc}) and  (\ref{EQM}). This gives central configurations which are not rescaled when the masses are rescaled. The shifted BWC matrix becomes
$$\Zs=\begin{pmatrix}\Sigmas_1& -m_1 \Ss_{12}&-m_1 \Ss_{13}&-m_1 \Ss_{14}& -m_1\Ss_{15}\cr -m_2\Ss_{12}&\Sigmas_2&-m_2\Ss_{23}&-m_2\Ss_{24} &-m_2\Ss_{25}\cr -m_3 \Ss_{13}& -m_3 \Ss_{23}&\Sigmas_3& -m_3\Ss_{34}& -m_3\Ss_{35}\cr -m_4 \Ss_{14}&-m_4 \Ss_{24}&-m_4 \Ss_{34}&\Sigmas_4&-m_4 \Ss_{45}\cr
-m_5 \Ss_{15}&-m_5 \Ss_{25}&-m_5 \Ss_{35}&-m_5 \Ss_{45}&\Sigmas_5\end{pmatrix},$$
with\begin{equation}\label{EQs}
\Ss_{ij}=S_{ij}-1=\frac{1}{r_{ij}^3}-1,
\end{equation}
and where 
 $\Sigmas_1,\dots,\Sigmas_5$ are characterized by the condition (\ref{EQk}). Equations (\ref{EQccs}) are linear in the masses.

Changing of the BWC matrix $\Z$ into the ``shifted BWC matrix'' $\Zs$ may now be interpreted as changing the law of attraction, since we change the coefficient $1/r_{ij}^3$ of the central force into $1/r_{ij}^3-1$. This new coefficient corresponds to what may be called the amended Newtonian force. The Newtonian potential (or force function)
$$U=\sum_{i<j} \frac{m_im_j}{r_{ij}}$$
is changed into the ``amended Newtonian potential''
$$\Us=\sum_{i<j} m_im_j\Bigl(\frac{1}{r_{ij}}+\frac{r_{ij}^2}{2}\Bigr).$$ We look for the equilibria of this amended force. An equilibrium is a relative equilibrium which has a zero angular velocity.

Let $$\mu=\begin{pmatrix}m_1& 0& 0& 0& 0\cr 0&m_2&0& 0& 0\cr 0& 0&m_3& 0& 0\cr 0& 0& 0&m_4&0\cr
	0& 0& 0& 0&m_5\end{pmatrix}.$$
The matrix $\Zs\mu$ is symmetric of rank at most 2. Since the masses are positive, $\mu$ is positive and there is a common diagonalization
\begin{equation}\label{EQc}
\Zs\mu=\nu_1\Phi\otimes \Phi+\nu_2\Psi\otimes\Psi
\end{equation}
where the covectors $\Phi$ and $\Psi$ satisfy (\ref{EQo}) and $\|\Phi\|^2=\Phi\mu^{-1}\Phi^t=1$, $\|\Psi\|^2=\Psi\mu^{-1}\Psi^t=1$, $\langle \Phi,\Psi\rangle=\Phi\mu^{-1}\Psi^t=0$. The eigenvalues of $\Zs$ are $(0,0,0,\nu_1,\nu_2)$. The eigenvalues of $\Z=\Zs+\lambda\I$ are $(0,\lambda,\lambda,\lambda+\nu_1,\lambda+\nu_2)$, since $\I=\1-(m_1,\dots,m_5)\otimes (1,\dots,1)/M$.

The computation of the $2\times 2$ minors of $\Zs\mu$ by using (\ref{EQc}) is easy. They all have $\nu_1\nu_2$ in factor:
$$(\nu_1\Phi_i\Phi_k+\nu_2\Psi_i\Psi_k)(\nu_1\Phi_j\Phi_l+\nu_2\Psi_j\Psi_l)-
(\nu_1\Phi_i\Phi_l+\nu_2\Psi_i\Psi_l)(\nu_1\Phi_j\Phi_k+\nu_2\Psi_j\Psi_k)$$
$$=\nu_1\nu_2(\Phi_i\Psi_j-\Phi_j\Psi_i)(\Phi_k\Psi_l-\Phi_l\Psi_k).$$
Here we assume $i\neq j$ and $k\neq l$. In the case of minors without a diagonal entry ($i\neq k$, $j\neq l$, $i\neq l$, $j\neq k$) Williams' quantities appear:
$$m_im_km_jm_l(\Ss_{ik}\Ss_{jl}-\Ss_{il}\Ss_{jk})=\nu_1\nu_2(\Phi_i\Psi_j-\Phi_j\Psi_i)(\Phi_k\Psi_l-\Phi_l\Psi_k).$$
Lemma \ref{lem1} transforms this equation into
$$m_im_km_jm_l(\Ss_{ik}\Ss_{jl}-\Ss_{il}\Ss_{jk})a^2=\nu_1\nu_2\Delta^{ij}\Delta^{kl}.$$
The factor $a^2$ is given by (\ref{EQab}) where $$\|e_1^*\wedge e_2^*\wedge e_3^*\wedge e_4^*\wedge e_5^*\|^2=\|e_1^*\|^2\|e_2^*\|^2\|e_3^*\|^2\|e_4^*\|^2\|e_5^*\|^2=(m_1m_2m_3m_4m_5)^{-1}.$$
If we denote by $h$ the missing index, $1\leq h\leq 5$, $h\neq i,j,k,l$, and pass to the other notation of the areas, which does not change the sign whatever the parity of the permutation $12345\mapsto ijhkl$, we get 
\begin{equation}\label{EQW}\Ss_{ik}\Ss_{jl}-\Ss_{il}\Ss_{jk}=\nu m_h\Delta_{ijh}\Delta_{klh},
\end{equation}
where 
\begin{equation}\label{EQX}
\nu=\frac{\nu_1\nu_2}{\|U\wedge X\wedge Y\|^2}.
\end{equation}
The denominator of $\nu$ may be expressed by (\ref{EQw}). Equations (6), (7), (8), (12), (14) of Williams' paper are straightforward consequences of the simpler identities (\ref{EQW}). 

\bigskip

\section{Another approach to identities (\ref{EQW})}

\bigskip

Let us consider the first two vector equations for central configurations, with the normalization and notation (\ref{EQs}): 
\begin{equation}\label{EQAg}
0=\sum_{j\neq 1}m_j\Ss_{1j}(\mathrm{q}_j-\mathrm{q}_1),\qquad 0=\sum_{j\neq 2}m_j\Ss_{2j}(\mathrm{q}_j-\mathrm{q}_2).
\end{equation}
We take the wedge product with $\mathrm{q}_2-\mathrm{q}_1$, and use the notation of Lemma \ref{lem1} for the areas of triangles:
\begin{equation}\label{EQA}
0=\sum_{j>2}m_j\Ss_{1j}\Delta_{12j},\qquad 0=\sum_{j> 2}m_j\Ss_{2j}\Delta_{12j}.
\end{equation}
These are Williams' equations of the first kind (equations (5) of \cite{Wil}). Note that both equations (\ref{EQA}) only differ by an index. If we subtract them we find
\begin{equation}\label{EQAn}
\sum_{j>2}m_j(S_{1j}-S_{2j})\Delta_{12j}=0,
\end{equation}
which is a classical equation which appeared in 1892 in a work by Krediet \cite{Kre}, reappeared in 1905 in a work by Laura \cite{Lau} and in 1906 in a work by Andoyer \cite{And}.

The present work is about $n=5$, and we will show how to deal with (\ref{EQA}) in this case. It is worth mentioning that applying the same reasoning to $n=4$ naturally leads to identities discovered by Dziobek \cite{Dzi} in 1900.

For a planar noncollinear configuration of $n=5$ distinct points, with nonzero masses, (\ref{EQA}) expresses that in $\mathbb{R}^3$ the nonzero vector
\begin{equation}\label{EQB}
(m_3\Delta_{123},m_4\Delta_{124},m_5\Delta_{125})
\end{equation}
is orthogonal to both vectors $(\Ss_{13},\Ss_{14},\Ss_{15})$ and $(\Ss_{23},\Ss_{24},\Ss_{25})$, so, the vector
\begin{equation}\label{EQC}
(\Ss_{14}\Ss_{25}-\Ss_{15}\Ss_{24},\Ss_{15}\Ss_{23}-\Ss_{13}\Ss_{25},\Ss_{13}\Ss_{24}-\Ss_{14}\Ss_{23}),
\end{equation}
is proportional to (\ref{EQB}).
This is a second kind of Williams' equation (equations (6) of \cite{Wil}).
Consider now the 60 fractions:
\begin{equation}\label{EQD}
\nu_{ijhkl}=\frac{\Ss_{ik}\Ss_{jl}-\Ss_{il}\Ss_{jk}}{m_h\Delta_{ijh}\Delta_{klh}},\end{equation}
where $(1, 2, 3, 4, 5)\mapsto (i,j,h,k,l)$ is any even permutation. As (\ref{EQB}) and consequently (\ref{EQC}) may be written for any $(i,j)$ instead of $(1,2)$,
\begin{equation}\label{EQF}
\nu_{ijhkl}=\nu_{ijlhk}
\end{equation} 
for any choice of distinct indices. In words, a circular permutation of the {\it last} three indices of $\nu_{ijhkl}$ does not change the fraction. But (\ref{EQD}) gives 
\begin{equation}\label{EQG}\nu_{ijhkl}=\nu_{klhij}
\end{equation}
which combined with (\ref{EQF}) gives
\begin{equation}\label{EQH}
\nu_{ijhkl}=\nu_{hijkl}
\end{equation}
In words, a circular permutation of the {\it first} three indices of $\nu_{ijhkl}$ does not change the fraction. We also have $\nu_{ijhkl}=\nu_{jihkl}$. We chose to consider only the even permutations of the indices, so we only retain
\begin{equation}\label{EQI}
\nu_{ijhkl}=\nu_{jihlk}.
\end{equation}
Note that in (\ref{EQF}) or (\ref{EQH}), the equality is weaker than in (\ref{EQG}) and (\ref{EQI}). In (\ref{EQG}) and in (\ref{EQI}), the numerators are equal and the denominators are equal. If one fraction is indeterminate, so is the other.

It is easy to see that these relations (\ref{EQF}), (\ref{EQG}), (\ref{EQH}), (\ref{EQI}) generate all the even permutations of the five indices, and that indeed (\ref{EQF}) and (\ref{EQH}) are enough. But this argument is not sufficient to prove that all the fractions (\ref{EQD}) are equal, since if a fraction $b$ is indeterminate, $a=b$ and $b=c$ do not imply $a=c$. Indeterminacies do occur when 3 points are collinear. So, we would have to examine these special cases, which would make this second approach to (\ref{EQW}) slightly longer than the first. The cases with 3 collinear bodies are of interest (see e.g.\ \cite{ChC}).

The following representation may be useful when discussing precisely the transitivity of the permutations (\ref{EQF}) and (\ref{EQH}). Let us fix $(i,j)$ and represent the 3 equalities (\ref{EQF}) as the 3 edges of a triangle, whose vertices represent the 3 fractions. For example with the choice $(i,j)=(1,2)$, applying 3 times (\ref{EQF}) closes a triangle formed by the sequence of fractions with indices
$$12345, 12534, 12453, 12345.$$ Interestingly, extending this representation to the other fractions naturally leads to draw an icosidodecahedron (Figure \ref{DaVinci}). To see how, let us follow a path which continues an edge of the triangle without following the triangle. Applying three times the sequence of permutations (\ref{EQF}), (\ref{EQH}) we find
\begin{equation}\label{EQJ}
12345, 12534, 51234, 51423, 45123, 45312.
\end{equation}
We arrive at the image of 12345 by (\ref{EQG}).
If instead we apply three times the sequence (\ref{EQF}), (\ref{EQH})$^{-1}$ we find
\begin{equation}\label{EQK}
12345, 12534, 25134, 25413, 54213, 54321.
\end{equation}
Here we arrive at the image of 12345 by the sequence (\ref{EQI}), (\ref{EQG}). We decide that a fraction is identified to its image by (\ref{EQG}), so the 60 fractions (\ref{EQD}) are now 30 classes. Each class corresponds to a vertex of the icosidodecahedron. The sequences of type (\ref{EQJ}) close the pentagons.  The sequences of type (\ref{EQK}) join a vertex to the opposite vertex. All these sequences are consistent with the geometry of the icosidodecahedron.

 \begin{figure}[H] 
     \centering     \includegraphics[width=0.35\textwidth, angle=0]{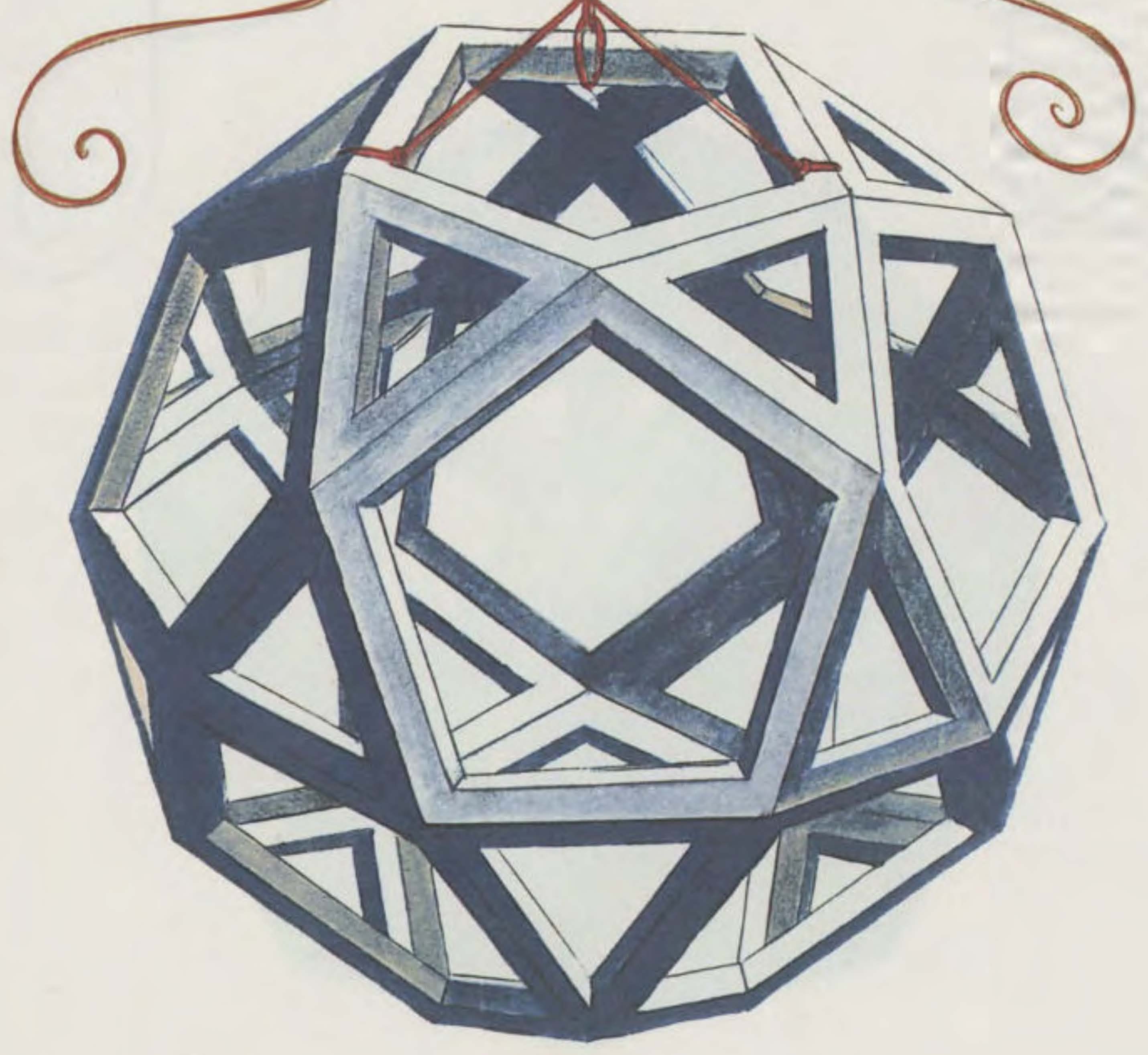}
     \caption{ Icosidodecahedron drawn in {\it Divina proportione} \cite{Paci}.}
     \label{DaVinci}
 \end{figure}

\section{Some inequalities for a 5-body central configuration}

\bigskip

For any noncollinear planar 5-body configuration, the convex hull of the 5 points has the shape of a triangle, a quadrilateral or a pentagon.
We call the ones with triangular or quadrilateral convex hull concave configurations, the ones with pentagonal convex hull strictly convex configurations. We recall that an {\it extreme point} is a point of a convex set which is not on a segment drawn between two other points of the convex set. If the convex set is the convex hull of a configuration of $n$ points, the extreme points are among the $n$ points. We may call them the extreme points of the configuration.  A configuration is called strictly convex if all the points are extreme points. A configuration is convex if all the points are on the boundary of their convex hull. A configuration with triangular convex hull has 3 extreme points, etc. Chen and Hsiao \cite{CH} (2017) prove the following results.

\begin{prop}\cite{CH} 
For a strictly convex 5-body planar central configuration, without loss of generality, we number counterclockwise the bodies as $\mathrm{q}_1$, $\mathrm{q}_2$, $\mathrm{q}_3$, $\mathrm{q}_4$, $\mathrm{q}_5$, then 
 \begin{enumerate}
  \item $r_{13}>r_{12}$, $r_{23}$;\ \ $r_{24}>r_{23}$, $r_{34}$;\ \ $r_{35}>r_{34}$, $r_{45}$;\ \ $r_{14}>r_{45}$, $r_{15}$;\ \ $r_{25}>r_{15}$, $r_{12}$.
  \item $r_{12}>r_{23}\ \Rightarrow r_{14}>r_{34}$;\ \ $r_{23}>r_{34}\ \Rightarrow r_{25}>r_{45}$;\ \ $r_{34}>r_{45}\ \Rightarrow r_{13}>r_{15}$;\ \ $r_{45}>r_{15}\ \Rightarrow r_{24}>r_{12}$;\ \ $r_{15}>r_{12}\ \Rightarrow r_{35}>r_{23}$.
\end{enumerate}
    
\end{prop}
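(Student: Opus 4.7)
The plan is to combine Williams' first-kind equations (\ref{EQA}) with the complete sign determination of oriented triangle areas that strict convexity provides. For a counterclockwise-labeled pentagon $1,2,3,4,5$, every $\Delta_{ijk}$ is positive precisely when $(i,j,k)$ is an even permutation of its indices in increasing order; this encodes all the geometric input of the argument.

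For (1), consider the inequality $r_{13}>r_{12}$, equivalent to $\Ss_{12}>\Ss_{13}$. Wedging the Newtonian equation at body $1$, namely $0=\sum_{j\neq 1}m_j\Ss_{1j}(\mathrm{q}_j-\mathrm{q}_1)$, successively with $\mathrm{q}_2-\mathrm{q}_1$ and $\mathrm{q}_3-\mathrm{q}_1$ yields the two scalar relations
\begin{align*}
m_3\Ss_{13}\Delta_{123}+m_4\Ss_{14}\Delta_{124}+m_5\Ss_{15}\Delta_{125}&=0,\\
m_2\Ss_{12}\Delta_{123}&=m_4\Ss_{14}\Delta_{134}+m_5\Ss_{15}\Delta_{135}.
\end{align*}
A direct mass-weighted combination produces
\begin{equation*}
m_2m_3(\Ss_{12}-\Ss_{13})\Delta_{123}=m_4\Ss_{14}\bigl(m_3\Delta_{134}+m_2\Delta_{124}\bigr)+m_5\Ss_{15}\bigl(m_3\Delta_{135}+m_2\Delta_{125}\bigr).
\end{equation*}
Strict convexity makes every area factor on the right positive, so the sign of $\Ss_{12}-\Ss_{13}$ is controlled by the signs of $\Ss_{14}$ and $\Ss_{15}$. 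These are pinned down by analogous wedges at other bodies, or equivalently via the minor-identities (\ref{EQW}), whose right-hand side has a definite sign once the positivity of $\nu$ is established. The companion inequality $r_{13}>r_{23}$ follows by repeating the same argument with body $3$ in place of body $1$, and the remaining eight inequalities of (1) then follow by cyclic relabeling $k\mapsto k+1\pmod 5$.

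For (2), the implications $r_{12}>r_{23}\Rightarrow r_{14}>r_{34}$ and their cyclic partners are obtained from the Krediet-Laura-Andoyer equation (\ref{EQAn}) applied to the diagonal $(1,3)$, which after sign tracking reads $m_2(S_{12}-S_{23})\Delta_{123}=m_4(S_{14}-S_{34})\Delta_{134}+m_5(S_{15}-S_{35})\Delta_{135}$. Under the hypothesis $S_{12}<S_{23}$, the left side is negative, so at least one right-hand summand is negative. Combining with the analogous equation for a neighbouring diagonal and exploiting the fixed signs of the $\Delta$-factors isolates $S_{14}-S_{34}<0$.

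The main obstacle is the sign analysis: the identities (\ref{EQW}) compare products of $\Ss$'s rather than individual values, so to extract single inequalities one must know the sign of each $\Ss_{ij}=r_{ij}^{-3}-1$, equivalently whether $r_{ij}$ lies below or above the normalization length $1$. Handling these ten sign patterns uniformly, with strict convexity cutting down the admissible cases, is where the technical difficulty concentrates.
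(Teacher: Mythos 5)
This Proposition is not proved in the paper at all: it is imported verbatim from Chen--Hsiao \cite{CH}, and in the paper's logical architecture it is an \emph{input} (via Theorem \ref{cov}) used to establish Williams' inequality $\nu>0$ in the strictly convex case. Your attempt must therefore stand on its own, and it has a genuine gap at its central step. The combined identity you derive for part (1),
$m_2m_3(\Ss_{12}-\Ss_{13})\Delta_{123}=m_4\Ss_{14}\bigl(m_3\Delta_{134}+m_2\Delta_{124}\bigr)+m_5\Ss_{15}\bigl(m_3\Delta_{135}+m_2\Delta_{125}\bigr)$,
is algebraically correct, but it does not control the sign of $\Ss_{12}-\Ss_{13}$: in the cyclic labelling $r_{14}$ is a diagonal and $r_{15}$ a side, and these generically lie on opposite sides of the normalization length $1$ (already for the regular pentagon with equal masses one has $\Ss_{14}<0<\Ss_{15}$, since the equation at body $1$ forces $\Ss_{\mathrm{side}}$ and $\Ss_{\mathrm{diag}}$ to have opposite signs). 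So the right-hand side is a difference of positive quantities and its sign is exactly as undetermined as what you set out to prove. Your two proposed rescues both fail: the identities (\ref{EQW}) only give the signs of the combinations $\Ss_{ik}\Ss_{jl}-\Ss_{il}\Ss_{jk}$, never of an individual $\Ss_{ij}$; and invoking ``the positivity of $\nu$'' is circular, because in this paper $\nu>0$ for strictly convex configurations is deduced \emph{from} Theorem \ref{cov} of \cite{CH}, which in turn rests on the present Proposition.

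Part (2) has the same defect in sharper form. The Krediet--Laura--Andoyer relation for the pair $(1,3)$ gives, under $S_{12}<S_{23}$ and positivity of $\Delta_{134}$, $\Delta_{135}$, only that \emph{at least one} of $S_{14}-S_{34}$ and $S_{15}-S_{35}$ is negative; the sentence ``combining with the analogous equation for a neighbouring diagonal \dots isolates $S_{14}-S_{34}<0$'' is an assertion, not an argument, and it is precisely the case $S_{14}\geq S_{34}$, $S_{15}<S_{35}$ that must be excluded. Your closing paragraph concedes that the ten-fold sign analysis ``is where the technical difficulty concentrates'' --- but that analysis \emph{is} the proof, and it is absent. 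For reference, the route actually taken in \cite{CH} is geometric rather than algebraic: the statements follow from the perpendicular bisector theorem (Proposition \ref{PBTtm}) applied to suitable pairs of bodies together with the convex position of the five points, which locates the remaining bodies in the closed quadrants and forces the distance comparisons directly, without ever needing the individual signs of the $\Ss_{ij}$.
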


\begin{thm}\label{cov}\cite{CH} 
    For a strictly convex 5-body central configuration, we number the bodies in cyclic order from an arbitrary body as $\mathrm{q}_1$, $\mathrm{q}_2$, $\mathrm{q}_3$, $\mathrm{q}_4$, $\mathrm{q}_5$, then 
    $$\Ss_{12}\Ss_{34}\ >\ \Ss_{13}\Ss_{24}\ >\ \Ss_{14}\Ss_{23}.$$
\end{thm}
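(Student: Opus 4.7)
The plan is to reduce the chain $\check S_{12}\check S_{34}>\check S_{13}\check S_{24}>\check S_{14}\check S_{23}$ to the single condition $\nu>0$ by three applications of identity (\ref{EQW}). Since all four indices $\{1,2,3,4\}$ occur in each of the three differences being compared, the missing index must be $h=5$; choosing $(i,j,k,l)$ in turn to realize each pair of products yields
\begin{align*}
\check S_{12}\check S_{34}-\check S_{14}\check S_{23}&=\nu m_5\,\Delta_{135}\Delta_{245},\\
\check S_{13}\check S_{24}-\check S_{14}\check S_{23}&=\nu m_5\,\Delta_{125}\Delta_{345},\\
\check S_{12}\check S_{34}-\check S_{13}\check S_{24}&=\nu m_5\,\Delta_{145}\Delta_{235}.
\end{align*}
In a strictly convex configuration with counterclockwise labelling, every triple of indices is already in counterclockwise cyclic order, so every oriented area $\Delta_{abc}$ above is strictly positive. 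Each right-hand side thus carries the sign of $\nu$, and the three products $\check S_{12}\check S_{34}$, $\check S_{13}\check S_{24}$, $\check S_{14}\check S_{23}$ are strictly monotonically ordered in the direction determined by $\mathrm{sign}(\nu)$. The theorem is precisely the case $\nu>0$.

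\textbf{Step 2 (positivity of $\nu$).} Part 1 of the preceding Proposition gives the length orderings $r_{13}>r_{12},r_{23}$ and four cyclic analogues. Via the decreasing map $r\mapsto r^{-3}-1$ these translate into $\check S_{12},\check S_{23}>\check S_{13}$ and analogues, so in particular $\check S_{12}>\check S_{13}$ and $\check S_{34}>\check S_{24}$. Combining these two orderings naively produces $\check S_{12}\check S_{34}>\check S_{13}\check S_{24}$; feeding this into the third identity of Step 1 then gives $\nu m_5\Delta_{145}\Delta_{235}>0$, i.e.\ $\nu>0$, as required.

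\textbf{The main obstacle} is that the naive product step in Step 2 is legitimate only when the four factors share a common positive sign. The length orderings alone do not rule out configurations in which all $\check S_{ij}$ are negative (every mutual distance exceeding the normalization scale, $r_{ij}>1$), in which case $\check S_{12}>\check S_{13}$ becomes $|\check S_{12}|<|\check S_{13}|$ and the product inequality reverses. Closing the argument therefore requires either (i) showing that the all-negative case is incompatible with strict convexity at the normalization $\lambda=M$, using the amended equilibrium equations (\ref{EQA}); or (ii) supplementing Part 1 with the implications of Part 2 of the preceding Proposition to derive directly one strict inequality between two specific $\check S_{ij}$-products; or (iii) a continuity argument along a deformation to a regular pentagon with equal masses, where $\nu>0$ is explicit by symmetry and where $\nu$ does not vanish along the path. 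Any of these converts the purely metric length orderings of the Proposition into the sign statement needed for $\nu>0$.
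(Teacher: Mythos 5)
Your Step 1 is correct, and it is in fact the same observation the paper itself makes, only run in the opposite direction: in the section on Williams' inequality the authors note that for a strictly convex configuration all the $\Delta_{ijk}$ with $i<j<k$ are positive, so that identity (\ref{EQW}) with $h=5$ makes the chain $\Ss_{12}\Ss_{34}>\Ss_{13}\Ss_{24}>\Ss_{14}\Ss_{23}$ equivalent to $\nu>0$. But note the direction of the logic there: the paper takes Theorem \ref{cov} as an imported result of Chen and Hsiao \cite{CH} and uses it to \emph{deduce} $\nu>0$ in the strictly convex case; its own new arguments (Propositions \ref{protri} and \ref{proquad}) establish $\nu>0$ only for the concave configurations. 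There is no proof of Theorem \ref{cov} in the paper at all, so the entire burden of your proposal rests on Step 2.

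Step 2 has a genuine gap, which you identify but do not close. From $\Ss_{12}>\Ss_{13}$ and $\Ss_{34}>\Ss_{24}$ one has $\Ss_{12}\Ss_{34}-\Ss_{13}\Ss_{24}=\Ss_{12}(\Ss_{34}-\Ss_{24})+\Ss_{24}(\Ss_{12}-\Ss_{13})$, which is positive only under sign hypotheses such as $\Ss_{12}>0$ and $\Ss_{24}\geq 0$. The obstruction is not merely the ``all $\Ss_{ij}$ negative'' case you single out (that case is already excluded by Proposition \ref{DSTtm} applied at an extreme point): in a strictly convex central configuration the diagonals are expected to satisfy $\Ss_{13}<0$ and $\Ss_{24}<0$ --- this is exactly the analogue of what Corollary \ref{cor2} proves for quadrilateral hulls --- and with $\Ss_{24}<0$ the decomposition above acquires a negative term, so the naive multiplication fails precisely in the generic situation. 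Your three proposed repairs are not carried out and each faces a substantive obstacle: (i) handles only the all-negative case, which is not the only bad one; (ii) is left entirely unspecified; (iii) requires both connectedness of the set of strictly convex central configurations over all positive masses and nonvanishing of $\nu$ along the deformation, neither of which is available (and the latter is essentially the statement to be proved). The actual proof of this theorem in \cite{CH} is a substantial argument occupying much of that paper; nothing of comparable force appears here, so the proposal does not constitute a proof.
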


We will prove similar results about the mutual distances in concave 5-body central configurations. We begin with an easy but useful result which does not require the configuration to be central and may consequently be applied to a subconfiguration of a central configuration.

\begin{prop}\label{exter} Consider a configuration with four distinct points $\mathrm{q}_1$, $\mathrm{q}_2$, $\mathrm{q}_3$, $\mathrm{q}_4$. If $$r_{12}\leq r_{14}\quad \hbox{and} \quad r_{13}\leq r_{14}$$
then $\mathrm{q}_4$ is strictly exterior to the triangle $\mathrm{q}_1\mathrm{q}_2\mathrm{q}_3$.
\end{prop}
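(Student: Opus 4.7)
The plan is to argue via the closed disk $D$ of radius $r_{14}$ centered at $\mathrm{q}_1$. By hypothesis $\mathrm{q}_2,\mathrm{q}_3\in D$, so by convexity of $D$ the entire closed triangle $T=\mathrm{q}_1\mathrm{q}_2\mathrm{q}_3$ lies in $D$. On the other hand $\mathrm{q}_4$ sits exactly on $\partial D$ by definition of $r_{14}$. The whole question therefore reduces to showing that $T\cap\partial D\subseteq\{\mathrm{q}_2,\mathrm{q}_3\}$; since $\mathrm{q}_4$ is distinct from both, this will place $\mathrm{q}_4$ outside $T$.

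To handle that reduction, I would write an arbitrary point $p\in T$ in barycentric form $p=\alpha\mathrm{q}_1+\beta\mathrm{q}_2+\gamma\mathrm{q}_3$ with $\alpha,\beta,\gamma\ge 0$ and $\alpha+\beta+\gamma=1$. Then
\[
|p-\mathrm{q}_1|=|\beta(\mathrm{q}_2-\mathrm{q}_1)+\gamma(\mathrm{q}_3-\mathrm{q}_1)|\le\beta r_{12}+\gamma r_{13}\le(\beta+\gamma)r_{14}=(1-\alpha)r_{14}\le r_{14}.
\]
For $p\in\partial D$ all three inequalities must be equalities. The last one forces $\alpha=0$, so $p$ lies on the edge $\mathrm{q}_2\mathrm{q}_3$. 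The middle equality becomes $\beta(r_{14}-r_{12})+\gamma(r_{14}-r_{13})=0$ with both summands nonnegative, which forces each summand to vanish. The first equality is the triangle inequality case: if both $\beta,\gamma>0$, the vectors $\mathrm{q}_2-\mathrm{q}_1$ and $\mathrm{q}_3-\mathrm{q}_1$ must be positively proportional, and combined with $r_{12}=r_{13}=r_{14}$ this would give $\mathrm{q}_2=\mathrm{q}_3$, contradicting distinctness. Hence either $\beta=0$ (so $p=\mathrm{q}_3$) or $\gamma=0$ (so $p=\mathrm{q}_2$).

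Putting the two halves together: $\mathrm{q}_4\in\partial D$, $\mathrm{q}_4\notin\{\mathrm{q}_2,\mathrm{q}_3\}$, and the only candidate points of $T$ on $\partial D$ are $\mathrm{q}_2$ and $\mathrm{q}_3$, so $\mathrm{q}_4\notin T$, i.e.\ $\mathrm{q}_4$ is strictly exterior to the (closed) triangle. There is no real obstacle; the only point deserving a little care is ruling out the degenerate equality case of the triangle inequality, which is where the hypothesis that the four points are distinct gets used.
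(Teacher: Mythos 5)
Your proof is correct. The paper itself gives no argument here --- it simply declares ``The proof is obvious'' --- so there is nothing to compare against; your write-up is the natural formalization of the intended geometric fact, namely that the distance from $\mathrm{q}_1$ to a point of the closed triangle is at most $\max(r_{12},r_{13})\le r_{14}$, with equality to $r_{14}$ only possible at $\mathrm{q}_2$ or $\mathrm{q}_3$. Your careful handling of the equality cases (forcing $\alpha=0$, then each term $\beta(r_{14}-r_{12})$, $\gamma(r_{14}-r_{13})$ to vanish, then the collinearity case of the triangle inequality, all resolved using the distinctness of the points) is exactly the content the paper leaves to the reader, and it goes through even when $\mathrm{q}_1,\mathrm{q}_2,\mathrm{q}_3$ are collinear.
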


The proof is obvious. The proposition can easily be extended by replacing the triangle by another configuration, $\mathrm{q}_2$ and $\mathrm{q}_3$ by the extreme points and $\mathrm{q}_1$ by any point.

 \begin{figure}[H] 
     \centering     \includegraphics[width=0.3\textwidth, angle=0]{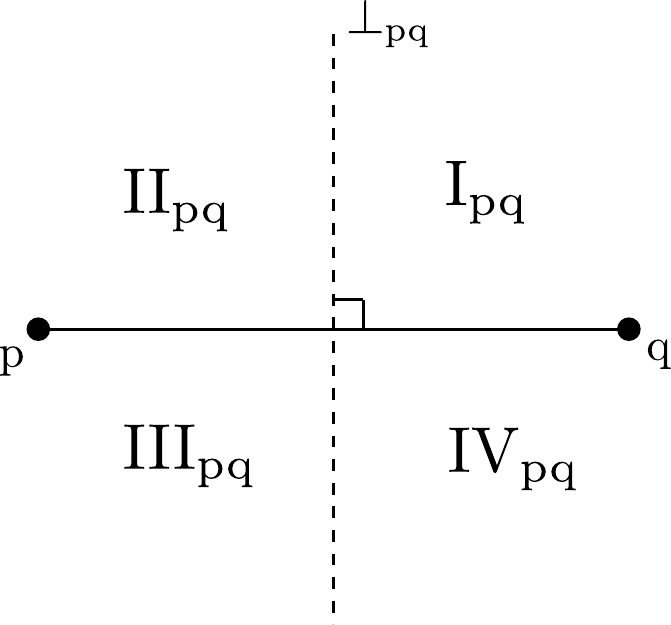}
     \caption{Four quadrants in the plane.}
     \label{PBT}
 \end{figure}

We continue with two results which apply to central configurations of the $n$-body problem. For any 2 points $\mathrm{p}$, $\mathrm{q}$ in the plane, we denote by $\overline{\mathrm{pq}}$ the line segment with the endpoints $\mathrm{p}$ and $\mathrm{q}$, by $\mathrm{pq}$ the line passing through points $\mathrm{p}$ and $\mathrm{q}$ and by $\perp_{\mathrm{pq}}$ the perpendicular bisector of $\overline{\mathrm{pq}}$. We regard $\mathrm{pq}$ as the $\mathrm{x}$-axis and $\perp_{\mathrm{pq}}$ as the $\mathrm{y}$-axis in the plane they are in, then we name the four quadrants of this Cartesian coordinate system counterclockwise starting from the upper right quadrant as $\mathrm{I}_{pq}$, $\mathrm{II}_{pq}$, $\mathrm{III}_{pq}$, $\mathrm{IV}_{pq}$ (see Figure \ref{PBT}). By quadrant we always mean the closed set. We use $\mathrm{X}^{\circ}$ to denote the interior of a closed set $\mathrm{X}$.

\begin{prop}\label{PBTtm}(Perpendicular bisector theorem, Conley)
For any planar central configuration, any two bodies $\mathrm{p}$, $\mathrm{q}$ of the configuration, consider the four open domains $\mathrm{I}_{pq}^{\circ}$, $\mathrm{II}_{pq}^{\circ}$, $\mathrm{III}_{pq}^{\circ}$, $\mathrm{IV}_{pq}^{\circ}$ delimited by the line passing the bodies and by their perpendicular bisector (see Figure \ref{PBT}). There are no bodies in $\mathrm{I}_{pq}^{\circ}\bigcup \mathrm{III}_{pq}^{\circ}$ if and only if there are no bodies in $\mathrm{II}_{pq}^{\circ}\bigcup \mathrm{IV}_{pq}^{\circ}$ (see Figure \ref{PBT2}).
\end{prop}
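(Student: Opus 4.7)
The plan is to extract from the pair of central-configuration equations attached to $\mathrm{p}$ and $\mathrm{q}$ a single scalar identity in which every other body contributes a term whose sign is dictated by the quadrant it occupies.

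First I adopt the frame implicit in the statement, writing $\mathrm{p}=(-a,0)$ and $\mathrm{q}=(a,0)$ with $a>0$, so that $\mathrm{I}^\circ_{pq}\cup\mathrm{III}^\circ_{pq}$ is the open locus where $x_jy_j>0$ and $\mathrm{II}^\circ_{pq}\cup\mathrm{IV}^\circ_{pq}$ the open locus where $x_jy_j<0$. Then I write the central-configuration equation
$$\lambda\mathrm{q}_i=\sum_{j\neq i}m_jS_{ij}(\mathrm{q}_i-\mathrm{q}_j)$$
for $i=p$ and $i=q$, subtract them, and pull the mutual index ($j=q$ out of the first sum, $j=p$ out of the second) aside. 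The result is
$$\bigl[\lambda-(m_p+m_q)S_{pq}\bigr](\mathrm{q}_p-\mathrm{q}_q)=\sum_{j\neq p,q}m_j\bigl[S_{pj}(\mathrm{q}_p-\mathrm{q}_j)-S_{qj}(\mathrm{q}_q-\mathrm{q}_j)\bigr].$$
Since $\mathrm{q}_p-\mathrm{q}_q$ is horizontal in this frame, projecting on the $\mathrm{y}$-axis annihilates the left-hand side and yields the key identity
$$0=\sum_{j\neq p,q}m_j\,y_j\,(S_{qj}-S_{pj}).$$

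The biconditional is then a sign check. Because $S_{ij}=r_{ij}^{-3}$ is strictly decreasing in $r_{ij}$, the sign of $S_{qj}-S_{pj}$ equals the sign of $r_{pj}-r_{qj}$, which in this frame is the sign of $x_j$. Hence each summand $m_jy_j(S_{qj}-S_{pj})$ has the sign of $x_jy_j$: strictly positive on $\mathrm{I}^\circ_{pq}\cup\mathrm{III}^\circ_{pq}$, strictly negative on $\mathrm{II}^\circ_{pq}\cup\mathrm{IV}^\circ_{pq}$, and zero whenever $\mathrm{q}_j$ lies on either axis. For the sum to vanish, bodies of the first type and bodies of the second type must either coexist or both be absent, which is exactly the claimed equivalence.

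The only mild obstacle is a bookkeeping one: after peeling off the $j=p$ and $j=q$ terms, one must check that the residual coefficient $\lambda-(m_p+m_q)S_{pq}$, whose sign is not a priori controlled, multiplies precisely $\mathrm{q}_p-\mathrm{q}_q$, so that its value becomes immaterial once one projects onto $\perp_{\mathrm{pq}}$. Everything beyond this is elementary monotonicity and uses no restriction on $n$.
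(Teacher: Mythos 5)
Your proof is correct and follows essentially the same route as the paper: your key identity $0=\sum_{j\neq p,q}m_j\,y_j\,(S_{qj}-S_{pj})$ is exactly the Krediet--Laura--Andoyer equation (\ref{EQAn}) written in the coordinate frame adapted to $\mathrm{p}$, $\mathrm{q}$, and the paper's proof is precisely the same term-by-term sign check on that equation. The only cosmetic difference is that the paper obtains the identity by wedging with $\mathrm{q}_2-\mathrm{q}_1$ and then subtracting, whereas you subtract first and project onto the perpendicular bisector; these are the same computation.
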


 \begin{figure}[H] 
     \centering     \includegraphics[width=0.3\textwidth, angle=0]{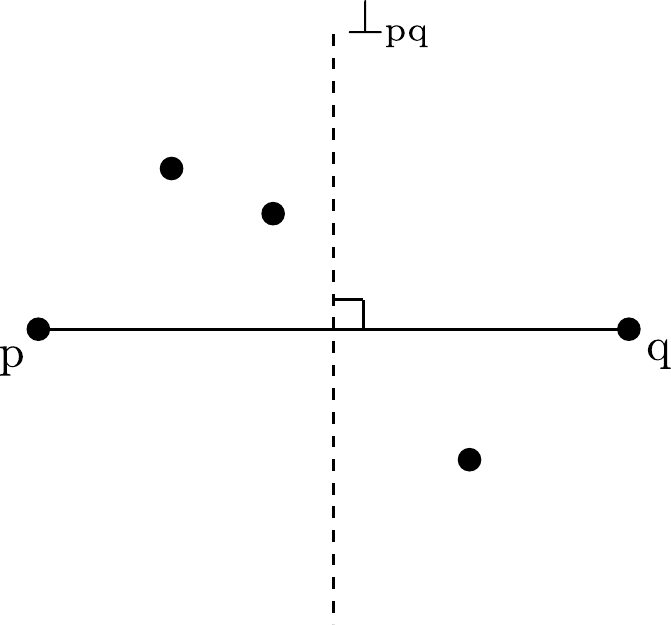}
     \caption{5-body configuration which is not central whatever the masses.}
     \label{PBT2}
 \end{figure}

\begin{proof} Consider the sign of each term of the Krediet-Laura-Andoyer equation (\ref{EQAn}) where we replace $\mathrm{q}_1$ and $\mathrm{q}_2$ by $\mathrm{p}$, $\mathrm{q}$.

\end{proof}

In a central configuration (\ref{EQcc}) with multiplier $\lambda$ and total mass $M$ the quantity $\sqrt[3]{M/\lambda}$ is a special distance that we may call the intermediate distance.  With the normalization (\ref{EQs}) this distance is 1.

 \begin{figure}[H]
     \centering     \includegraphics[width=0.16\textwidth, angle=0]{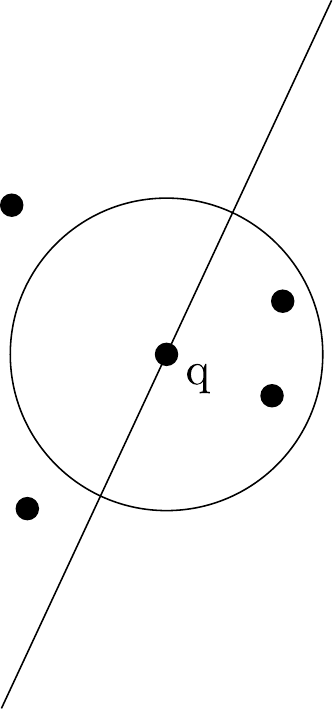}
     \caption{Normalized 5-body configuration which is not central whatever the masses.}
      \label{DST}
 \end{figure}

\begin{prop}\label{DSTtm}(Disk sector theorem)
For any planar central configuration with $\lambda=M$, for any body $\mathrm{q}$ of the configuration, consider the four open sectors delimited by an arbitrary line passing $\mathrm{q}$ and the unit circle centered at $\mathrm{q}$. Consider a domain formed by a half disk and the complementary of the other half disk in the other half plane. There are no bodies in this domain if and only if there are no bodies in the other similar domain (see Figure \ref{DST}).
\end{prop}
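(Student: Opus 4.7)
The proof should mirror the one the authors give for the Perpendicular Bisector Theorem, only using the full central configuration equation at a body instead of the Krediet--Laura--Andoyer combination. The key sign feature here is that after normalization $\lambda=M$ one has $\Ss_{ij}=r_{ij}^{-3}-1$, which is strictly positive for $r_{ij}<1$, strictly negative for $r_{ij}>1$, and zero on the unit circle. This is the mechanism that produces a ``disk sector'' dichotomy, just as the factor $S_{1j}-S_{2j}$ produces the half-plane dichotomy in Proposition~\ref{PBTtm}.

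Concretely, I would start from the central configuration equation at $\mathrm{q}$ in the form of the first identity of (\ref{EQAg}),
\begin{equation*}
0=\sum_{\mathrm{q}_j\neq \mathrm{q}} m_j\,\Ss_{\mathrm{q} j}\,(\mathrm{q}_j-\mathrm{q}),
\end{equation*}
and take the scalar product with a unit vector $\mathbf{v}$ perpendicular to the chosen line $\ell$ through $\mathrm{q}$. Writing $d_j:=\mathbf{v}\cdot(\mathrm{q}_j-\mathrm{q})$ for the signed distance of $\mathrm{q}_j$ to $\ell$, this yields the scalar identity
\begin{equation*}
0=\sum_{\mathrm{q}_j\neq \mathrm{q}} m_j\,\Ss_{\mathrm{q} j}\,d_j.
\end{equation*}

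Now I would read off the signs. Label the two open half-planes cut by $\ell$ as $H^+$ ($d_j>0$) and $H^-$ ($d_j<0$), and let $D^\circ$ and $D^{\mathrm{ext}}$ denote the open unit disk and its open exterior around $\mathrm{q}$. The first of the two ``similar domains'' in the statement is
\begin{equation*}
\Omega_1=(H^+\cap D^\circ)\cup (H^-\cap D^{\mathrm{ext}}),
\end{equation*}
and the second is $\Omega_2=(H^-\cap D^\circ)\cup(H^+\cap D^{\mathrm{ext}})$. A body $\mathrm{q}_j$ in $\Omega_1$ has either $d_j>0$ with $\Ss_{\mathrm{q} j}>0$, or $d_j<0$ with $\Ss_{\mathrm{q} j}<0$, so its term $m_j\,\Ss_{\mathrm{q} j}\,d_j$ is \emph{strictly positive}. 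Symmetrically, bodies in $\Omega_2$ contribute strictly negative terms. Bodies on the boundary, lying on $\ell$ or on the unit circle, contribute zero.

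Since the full sum vanishes, the strictly positive contributions from $\Omega_1$ must exactly cancel the strictly negative contributions from $\Omega_2$. In particular, one of the two subsums is zero iff the other is, which is exactly the claim: there are no bodies in $\Omega_1$ iff there are none in $\Omega_2$. No real obstacle is anticipated; the only thing to check carefully is the sign bookkeeping in the four regions and the fact that points exactly on $\ell$ or on the unit circle are genuinely harmless, which is immediate from $d_j=0$ and $\Ss_{\mathrm{q} j}=0$ respectively.
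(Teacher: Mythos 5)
Your proof is correct and is exactly the argument the paper intends: the published proof is the one-line instruction to ``consider the sign of each term of the general equation (\ref{EQAg}) projected on the $\mathrm{y}$-axis,'' which is precisely your projection onto the normal of $\ell$ followed by the sign bookkeeping in the four regions. Your write-up simply makes explicit the details (including the harmlessness of boundary points) that the authors leave to the reader.
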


\begin{proof} Consider the sign of each term of the general equation (\ref{EQAg}) projected on the $\mathrm{y}$-axis.

\end{proof}

\begin{prop}\label{prop1}
   For a central configuration whose convex hull is a triangle, we number the points as in Figure \ref{tri}, that is, such that this triangle is $\mathrm{q}_1\mathrm{q}_2\mathrm{q}_3$ and such that $\mathrm{q}_1$, $\mathrm{q}_2$, $\mathrm{q}_5$, $\mathrm{q}_4$ in counterclockewise order form a convex configuration. Then it has the following properties:

\begin{enumerate}
    \item $r_{15}>r_{45}$\ and\  $r_{24}>r_{45}$;
    \item $r_{15}>r_{14}$\ and\  $r_{24}>r_{25}$;
    \item $r_{13}>r_{14}$\ and\  $r_{23}>r_{25}$.
\end{enumerate}
\end{prop}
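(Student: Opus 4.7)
The configuration has $\mathrm{q}_1,\mathrm{q}_2,\mathrm{q}_3$ as vertices of the triangular convex hull and $\mathrm{q}_4,\mathrm{q}_5$ in the interior, with $\mathrm{q}_1,\mathrm{q}_2,\mathrm{q}_5,\mathrm{q}_4$ forming a counterclockwise convex quadrilateral; in particular $\mathrm{q}_3,\mathrm{q}_4,\mathrm{q}_5$ all lie on the $\mathrm{q}_3$-side of the edge $\overline{\mathrm{q}_1\mathrm{q}_2}$, and $\mathrm{q}_3,\mathrm{q}_4$ lie on the same side of the diagonal $\mathrm{q}_1\mathrm{q}_5$ while $\mathrm{q}_2$ lies on the opposite side. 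I would prove the three items in the order $(3)$, $(2)$, $(1)$, each using the previous.

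For item $(3)$, $r_{13}>r_{14}$: by Proposition~\ref{exter} applied to $\{\mathrm{q}_1,\mathrm{q}_2,\mathrm{q}_3,\mathrm{q}_4\}$, since $\mathrm{q}_4$ is strictly interior to the triangle $\mathrm{q}_1\mathrm{q}_2\mathrm{q}_3$ the conjunction ``$r_{12}\le r_{14}$ and $r_{13}\le r_{14}$'' is excluded. To rule out the remaining scenario $r_{13}\le r_{14}<r_{12}$, I would invoke Conley's Perpendicular Bisector Theorem (Proposition~\ref{PBTtm}) on the pair $\{\mathrm{q}_1,\mathrm{q}_4\}$: the convex-hull structure places $\mathrm{q}_2,\mathrm{q}_5$ on one side of the line $\mathrm{q}_1\mathrm{q}_4$ and $\mathrm{q}_3$ on the other, while $r_{13}\le r_{14}$ forces $\mathrm{q}_3$ onto the $\mathrm{q}_1$-half of $\perp_{14}$; this distribution loads bodies into $\mathrm{I}_{14}^\circ\cup\mathrm{III}_{14}^\circ$ but leaves $\mathrm{II}_{14}^\circ\cup\mathrm{IV}_{14}^\circ$ empty (or conversely), contradicting the PBT equivalence. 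The analogue $r_{23}>r_{25}$ follows by the symmetry $\mathrm{q}_1\leftrightarrow\mathrm{q}_2$, $\mathrm{q}_4\leftrightarrow\mathrm{q}_5$.

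For item $(2)$, $r_{15}>r_{14}$: the cyclic order together with item $(3)$ implies that $\mathrm{q}_4$ lies strictly inside the triangle $\mathrm{q}_1\mathrm{q}_3\mathrm{q}_5$, so Proposition~\ref{exter} applied to this triangle with test point $\mathrm{q}_4$ excludes the conjunction $r_{13}\le r_{14}$ and $r_{15}\le r_{14}$; item $(3)$ has already disposed of the first inequality strictly, and the standalone possibility $r_{15}\le r_{14}$ is ruled out by a second application of the PBT to $\{\mathrm{q}_1,\mathrm{q}_4\}$ that reuses the quadrant placement derived in item $(3)$ for $\mathrm{q}_2,\mathrm{q}_3,\mathrm{q}_5$. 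For item $(1)$, $r_{15}>r_{45}$: apply Proposition~\ref{exter} from the base point $\mathrm{q}_5$ to $\{\mathrm{q}_4,\mathrm{q}_3,\mathrm{q}_1\}$, noting via items $(2)$--$(3)$ that $\mathrm{q}_1$ lies strictly inside the triangle $\mathrm{q}_3\mathrm{q}_4\mathrm{q}_5$, so that $r_{15}>\min(r_{54},r_{53})$; a final PBT application on $\{\mathrm{q}_4,\mathrm{q}_5\}$ promotes the minimum to $r_{54}=r_{45}$. The companion $r_{24}>r_{45}$ is symmetric.

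The main obstacle is item $(3)$: bridging the conjunction-vs-disjunction gap of Proposition~\ref{exter} requires the PBT, and the quadrant bookkeeping must accommodate borderline cases where one of $\mathrm{q}_2,\mathrm{q}_3,\mathrm{q}_5$ happens to sit on $\mathrm{q}_1\mathrm{q}_4$ or on $\perp_{14}$. In those degenerate subcases the indeterminacies of the fractions $\nu_{ijhkl}$ from Section~4 reappear and must be handled by a direct sign inspection of the Krediet--Laura--Andoyer equation~(\ref{EQAn}).
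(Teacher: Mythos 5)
Your toolbox (Proposition \ref{exter} plus the perpendicular bisector theorem) is the same as the paper's, but the concrete deductions fail for a systematic reason. To get a contradiction from Proposition \ref{PBTtm} applied to a pair $\{\mathrm{p},\mathrm{q}\}$, you must locate a third body $\mathrm{a}$ relative to $\perp_{\mathrm{pq}}$, and that requires comparing the two distances \emph{from} $\mathrm{a}$, namely $|\mathrm{ap}|$ and $|\mathrm{aq}|$. In item (3) your hypothesis $r_{13}\le r_{14}$ compares the two distances from $\mathrm{q}_1$, so it locates $\mathrm{q}_1$ relative to $\perp_{\mathrm{q}_3\mathrm{q}_4}$ --- the pair must be $\{\mathrm{q}_3,\mathrm{q}_4\}$, which is what the paper uses. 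It does \emph{not} ``force $\mathrm{q}_3$ onto the $\mathrm{q}_1$-half of $\perp_{14}$'': that would require $r_{13}\le r_{34}$, whereas $r_{13}\le r_{14}$ only puts $\mathrm{q}_3$ in the disk of radius $r_{14}$ about $\mathrm{q}_1$, which crosses $\perp_{\mathrm{q}_1\mathrm{q}_4}$. Moreover, nothing in your setup controls on which side of $\perp_{\mathrm{q}_1\mathrm{q}_4}$ the bodies $\mathrm{q}_2$ and $\mathrm{q}_5$ lie (that would need $r_{12}$ vs.\ $r_{24}$ and $r_{15}$ vs.\ $r_{45}$), so the four bodies can occupy three of the four quadrants and the PBT alternative is not violated. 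The same wrong choice of pair recurs in item (2), where $r_{15}\le r_{14}$ locates $\mathrm{q}_1$ relative to $\perp_{\mathrm{q}_4\mathrm{q}_5}$ (the paper's pair is $\{\mathrm{q}_4,\mathrm{q}_5\}$, not $\{\mathrm{q}_1,\mathrm{q}_4\}$), and in item (1), where $r_{15}\le r_{45}$ locates $\mathrm{q}_5$ relative to $\perp_{\mathrm{q}_1\mathrm{q}_4}$.

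Two further assertions are false or unsupported. In item (1) you claim that $\mathrm{q}_1$ lies strictly inside the triangle $\mathrm{q}_3\mathrm{q}_4\mathrm{q}_5$; this is impossible, since $\mathrm{q}_1$ is a vertex of the convex hull of the five bodies and hence an extreme point. (Even granting it, the contrapositive of Proposition \ref{exter} with base point $\mathrm{q}_5$ yields $r_{15}<\max(r_{45},r_{35})$, not $r_{15}>\min(r_{45},r_{35})$.) In item (2), the claim that $\mathrm{q}_4$ lies inside the triangle $\mathrm{q}_1\mathrm{q}_3\mathrm{q}_5$ does not follow from the cyclic order: the diagonal $\mathrm{q}_1\mathrm{q}_5$ separates $\mathrm{q}_4$ from $\mathrm{q}_2$, but $\mathrm{q}_4$ may still lie beyond the segment $\overline{\mathrm{q}_3\mathrm{q}_5}$, outside that triangle; and in any case item (3) already breaks the conjunction tested by Proposition \ref{exter}, so that application yields nothing about $r_{15}$ vs.\ $r_{14}$. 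Finally, the closing appeal to the indeterminacies of the fractions $\nu_{ijhkl}$ of (\ref{EQD}) is out of place: Proposition \ref{prop1} is a purely metric statement, and borderline positions are absorbed by taking the located points in closed quadrants while Proposition \ref{PBTtm} concerns the open ones, exactly as in the paper's proof.
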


 \begin{figure}[H]
     \centering     \includegraphics[width=0.18\textwidth, angle=0]{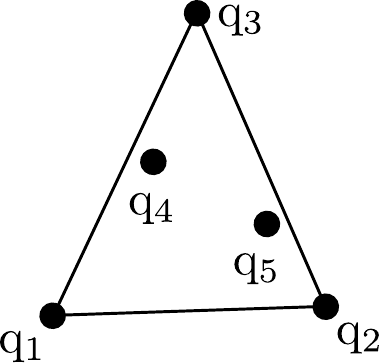}
     \caption{Central configuration whose convex hull is a triangle.}
      \label{tri}
 \end{figure}

\begin{proof}
\begin{enumerate}

  \item By renumbering the bodies, $r_{24}>r_{45}$ reduces to $r_{15}>r_{45}$. Assume the latter is not true. 
  Then we have $\mathrm{q}_5\in \mathrm{I}_{\mathrm{q}_4\mathrm{q}_1}$  (see Figure \ref{p1}).

 \begin{figure}[H]  
     \centering     \includegraphics[width=0.18\textwidth, angle=0]{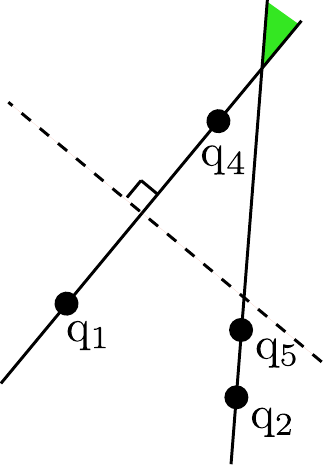}
     \caption{$\mathrm{q}_3$ is in the green region.}
     \label{p1}
 \end{figure}
  
By the hypothesis on the subconfiguration $\mathrm{q}_1$, $\mathrm{q}_2$, $\mathrm{q}_5$, $\mathrm{q}_4$, we have $\mathrm{q}_2\in \mathrm{I}^{\circ}_{\mathrm{q}_4\mathrm{q}_1}$. Since $\mathrm{q}_1$, $\mathrm{q}_2$, $\mathrm{q}_3$ are the 3 vertices of the convex hull of 5 points, we have $\mathrm{q}_3\in \mathrm{III}_{\mathrm{q}_4\mathrm{q}_1}$. This contradicts the perpendicular bisector theorem.
  
  \item By renumbering the bodies, $r_{24}>r_{25}$ reduces to $r_{15}>r_{14}$. Assume the latter is not true. Hence we have $\mathrm{q}_1\in \mathrm{IV}_{\mathrm{q}_4\mathrm{q}_5}$ (see Figure \ref{p2}).

\begin{figure}[H] 
     \centering     \includegraphics[width=0.2\textwidth, angle=0]{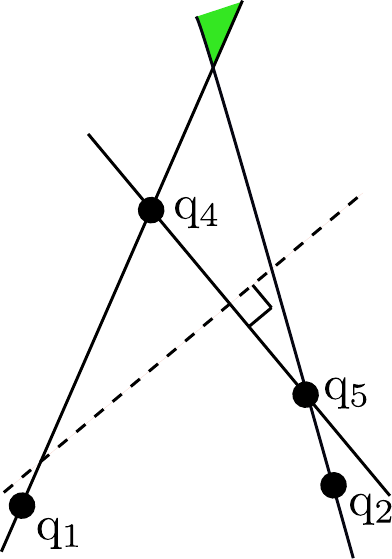}
     \caption{$\mathrm{q}_3$ is in the green region.}
     \label{p2}
 \end{figure}
  By the counterclockwise order of $\mathrm{q}_1$, $\mathrm{q}_2$, $\mathrm{q}_5$, $\mathrm{q}_4$, we must have $\mathrm{q}_2\in \mathrm{IV}_{\mathrm{q}_4\mathrm{q}_5}$. Since $\mathrm{q}_1$, $\mathrm{q}_2$, $\mathrm{q}_3$ form the vertices of the convex hull of 5 points, we have $\mathrm{q}_2\mathrm{q}_5\bigcap \mathrm{III}^{\circ}_{\mathrm{q}_4\mathrm{q}_5}=\emptyset$ and $\mathrm{q}_1\mathrm{q}_4\bigcap \mathrm{I}^{\circ}_{\mathrm{q}_4\mathrm{q}_5}=\emptyset$, hence $\mathrm{q}_3\notin \mathrm{III}^{\circ}_{\mathrm{q}_4\mathrm{q}_5}\bigcup \mathrm{I}^{\circ}_{\mathrm{q}_4\mathrm{q}_5}$. Therefore we have $\mathrm{q}_3\in \mathrm{II}_{\mathrm{q}_4\mathrm{q}_5}\bigcup \mathrm{IV}_{\mathrm{q}_4\mathrm{q}_5}$. This contradicts the perpendicular bisector theorem.

   \item By renumbering the bodies, $r_{23}>r_{25}$ reduces to $r_{13}>r_{14}$. Assume the latter is not true. Then $\mathrm{q}_1\in \mathrm{III}_{\mathrm{q}_3\mathrm{q}_4}$ (see Figure \ref{p3}).
  \begin{figure}[H]
   \centering    \includegraphics[width=0.36\textwidth, angle=0]{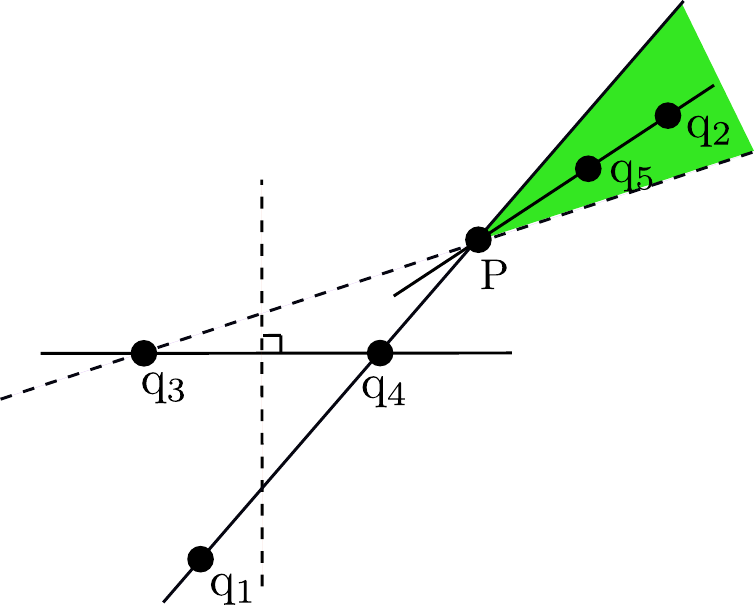}
     \caption{$\mathrm{q}_2$, $\mathrm{q}_5$ are in the green region.}
     \label{p3}
 \end{figure}
   Denote by $\mathrm{P}$ the intersection of ${\mathrm{q}_1\mathrm{q}_4}$ and ${\mathrm{q}_2\mathrm{q}_5}$. According to the hypothesis (see Figure \ref{tri}),  $\mathrm{q}_4$ is between $\mathrm{q}_1$ and $\mathrm{P}$.  Then we have $\mathrm{P}\in \mathrm{I}^{\circ}_{\mathrm{q}_3\mathrm{q}_4}$. Again by the hypothesis $\mathrm{q}_2$, $\mathrm{q}_5$ belong to the region delimited by ${\mathrm{q}_1\mathrm{q}_4}$ and $\mathrm{q}_3\mathrm{P}$ represented in Figure \ref{p3}. Then $\mathrm{q}_2$, $\mathrm{q}_5\in \mathrm{I}^{\circ}_{\mathrm{q}_3\mathrm{q}_4}$. This contradicts the perpendicular bisector theorem.
     
\end{enumerate}

\end{proof}

\begin{prop}\label{prop2or3}
    For the central configuration in Proposition\ \ref{prop1}, at least two of 
    $$\Ss_{12}<0\quad\hbox{or}\quad \Ss_{23}<0\quad {or}\quad \Ss_{13}<0,\quad\hbox{where}\quad \Ss_{ij}=\frac{1}{r_{ij}^3}-1,$$
    will happen.
\end{prop}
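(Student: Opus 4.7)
Since $\Ss_{ij}=1/r_{ij}^{3}-1$, the inequality $\Ss_{ij}<0$ is equivalent to $r_{ij}>1$, so the claim says that at least two sides of the outer triangle $\mathrm{q}_1\mathrm{q}_2\mathrm{q}_3$ have length greater than $1$. We argue by contradiction: assume at least two of $r_{12},r_{13},r_{23}$ are $\leq 1$. These two short sides necessarily share a common vertex, which we call $\mathrm{q}_p$; we denote the other two vertices of the outer triangle by $\mathrm{q}_r,\mathrm{q}_s$, so that $r_{pr}\leq 1$ and $r_{ps}\leq 1$.

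The plan is to apply the disk sector theorem (Proposition~\ref{DSTtm}) at $\mathrm{q}_p$. The two interior bodies $\mathrm{q}_4,\mathrm{q}_5$ lie in the closed triangle $\mathrm{q}_p\mathrm{q}_r\mathrm{q}_s$, which by hypothesis is the convex hull of the whole configuration; writing such a point as $a\mathrm{q}_p+b\mathrm{q}_r+c\mathrm{q}_s$ with $a,b,c\geq 0$ and $a+b+c=1$, the triangle inequality gives $r_{pj}\leq b\,r_{pr}+c\,r_{ps}\leq b+c\leq 1$, and this is strict because $\mathrm{q}_j$ is not a vertex. Hence all four of $\mathrm{q}_r,\mathrm{q}_s,\mathrm{q}_4,\mathrm{q}_5$ lie in the closed unit disk centered at $\mathrm{q}_p$, with $\mathrm{q}_4$ and $\mathrm{q}_5$ strictly in its interior. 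Because the interior angle of the triangle at $\mathrm{q}_p$ is strictly less than $\pi$, we may choose a line $L$ through $\mathrm{q}_p$ that avoids all four other bodies and puts them all strictly on one side of $L$. For such an $L$, one of the two L-shaped DST-domains---the one formed by the occupied half-disk together with the exterior of the unit disk on the empty side---contains all four bodies, while the other L-shaped domain contains none; this contradicts the biconditional of Proposition~\ref{DSTtm}.

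The only subtle point we anticipate is the borderline situation $r_{pr}=1$ or $r_{ps}=1$, in which a vertex of the outer triangle lies exactly on the unit circle around $\mathrm{q}_p$---that is, on the shared boundary of the two DST-domains, so the purely set-theoretic application of Proposition~\ref{DSTtm} becomes inconclusive. We expect to handle this by going back to the proof of the disk sector theorem itself: in the projection of equation~\eqref{EQAg} onto the direction normal to $L$, the term with coefficient $\Ss_{pr}=0$ drops out, leaving a sum of terms whose signs are entirely determined by which side of $L$ the corresponding body lies on; the absence of bodies on the empty side then forces this sum to be of a single strict sign, which is impossible.
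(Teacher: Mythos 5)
Your proof is correct and follows essentially the same route as the paper's: both locate the shared vertex of the two short sides, apply the disk sector theorem there with a supporting line, and use the fact that the interior bodies must then lie within unit distance of that vertex (which the paper extracts from Proposition \ref{exter} and you re-derive directly from the triangle inequality). Your explicit treatment of the boundary case $r_{pr}=1$ by returning to the sign analysis of the projected equation (\ref{EQAg}) is a welcome extra precision that the paper's terser argument glosses over.
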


\begin{proof}
    Assume the conclusion is not true, then we have for example $\Ss_{13}\geq 0$ and $\Ss_{23}\geq 0$.
Choose a line passing through $\mathrm{q}_3$ such that all the bodies are on the same side of the line. According to Proposition \ref{DSTtm}, at least one of $\Ss_{34}\leq 0$ or $\Ss_{35}\leq 0$ shall happen. But this contradicts Proposition \ref{exter} applied to the distances from $\mathrm{q}_3$.

\end{proof}

 \begin{prop}\label{prop2}
For a central configuration whose convex hull is a quadrilateral, we number the points as in Figure \ref{quad}, that is, the vertices of the convex hull are $\mathrm{q}_1$, $\mathrm{q}_2$, $\mathrm{q}_3$, $\mathrm{q}_4$ counterclockwise. Then it has the following properties:
\begin{enumerate}
  \item $r_{13}>r_{15}$,  
  \item $r_{13}>r_{12}$ or $r_{13}>r_{14}$, 
\end{enumerate}
and the similar inequalities obtained by circular permutations of the indices 1234.
\end{prop}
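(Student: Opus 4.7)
Plan: Both parts are proved by contradiction, in the spirit of Proposition \ref{prop1}, by combining Proposition \ref{exter} with the Perpendicular Bisector Theorem (Proposition \ref{PBTtm}). The ``similar inequalities by circular permutation'' follow by relabeling $(1,2,3,4) \to (2,3,4,1)$; and by the reflective symmetry $\mathrm{q}_2 \leftrightarrow \mathrm{q}_4$ across the diagonal $\overline{\mathrm{q}_1\mathrm{q}_3}$ one may assume throughout that $\mathrm{q}_5$ lies in the closed triangle $\mathrm{q}_1\mathrm{q}_3\mathrm{q}_4$.

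For Part (1), assume $r_{15} \geq r_{13}$. Since $\mathrm{q}_5 \in \mathrm{conv}\{\mathrm{q}_1,\mathrm{q}_3,\mathrm{q}_4\}$ and $|\cdot - \mathrm{q}_1|$ is convex, we have $r_{15} \leq \max(r_{13}, r_{14})$, forcing $r_{14} \geq r_{13}$. If in addition $r_{14} \leq r_{15}$, Proposition \ref{exter} applied to the four-point sub-configuration $\{\mathrm{q}_1, \mathrm{q}_3, \mathrm{q}_4, \mathrm{q}_5\}$ puts $\mathrm{q}_5$ strictly exterior to $\triangle\mathrm{q}_1\mathrm{q}_3\mathrm{q}_4$, contradicting the reduction. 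In the remaining regime $r_{14} > r_{15} \geq r_{13}$, apply Proposition \ref{PBTtm} to the pair $(\mathrm{q}_3, \mathrm{q}_5)$: the assumption $r_{13} \leq r_{15}$ places $\mathrm{q}_1$ strictly on $\mathrm{q}_3$'s side of $\perp_{\mathrm{q}_3\mathrm{q}_5}$, and a case-by-case analysis of the positions of $\mathrm{q}_2$ and $\mathrm{q}_4$ (drawing on the counterclockwise order of the hull and on the constraint $r_{14} > r_{15}$) shows that $\mathrm{q}_1$, $\mathrm{q}_2$, $\mathrm{q}_4$ all lie in a single pair of opposite open quadrants of the axes $\mathrm{q}_3\mathrm{q}_5$ / $\perp_{\mathrm{q}_3\mathrm{q}_5}$, contradicting the PBT.

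For Part (2), assume $r_{12} \geq r_{13}$ and $r_{14} \geq r_{13}$. Part (1), combined with its cyclic image $r_{35} < r_{13}$ under $(1,2,3,4) \to (3,4,1,2)$, confines $\mathrm{q}_5$ to the lens-shaped region $D^{\circ}(\mathrm{q}_1, r_{13}) \cap D^{\circ}(\mathrm{q}_3, r_{13})$ around the midpoint of $\overline{\mathrm{q}_1\mathrm{q}_3}$, while the assumption forces $\mathrm{q}_2$ and $\mathrm{q}_4$ outside $D(\mathrm{q}_1, r_{13})$. Apply Proposition \ref{PBTtm} to the pair $(\mathrm{q}_1, \mathrm{q}_3)$: the counterclockwise order puts $\mathrm{q}_2$ strictly below and $\mathrm{q}_4$ strictly above the diagonal, both outside the disk, while $\mathrm{q}_5$ sits inside the lens. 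A short analysis of where $\mathrm{q}_2$ and $\mathrm{q}_4$ sit with respect to $\perp_{\mathrm{q}_1\mathrm{q}_3}$ then collects $\mathrm{q}_2$, $\mathrm{q}_4$, $\mathrm{q}_5$ into a single pair of opposite open quadrants, contradicting the PBT.

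The main obstacle is the geometric case analysis, particularly in Part (1): the weak inequalities $r_{14} > r_{15} \geq r_{13}$ (together with the convex hull structure) do not by themselves determine on which side of $\perp_{\mathrm{q}_3\mathrm{q}_5}$ or of the line $\mathrm{q}_3\mathrm{q}_5$ each of $\mathrm{q}_2, \mathrm{q}_4$ lies, and one may need to subdivide further or invoke a secondary application of Proposition \ref{exter} to a rotated sub-configuration to route every sub-case to a PBT violation uniformly.
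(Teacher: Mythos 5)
Your overall strategy for part (1) --- contradiction, Proposition \ref{exter}, and the perpendicular bisector theorem applied to the pair $(\mathrm{q}_3,\mathrm{q}_5)$ --- is the paper's strategy, but the step you yourself flag as ``the main obstacle'' is precisely the content of the proof, and you do not carry it out. The missing quadrant bookkeeping is not a routine subdivision; it is done in the paper by exploiting the convex position of $\mathrm{q}_5$ inside the quadrilateral rather than any comparison of $r_{14}$ with $r_{15}$. Namely: $\mathrm{q}_5$ lies in the angular sector at $\mathrm{q}_3$ between the half-lines $\mathrm{q}_3\mathrm{q}_2$ and $\mathrm{q}_3\mathrm{q}_4$, so $\mathrm{q}_2$ and $\mathrm{q}_4$ lie on opposite sides of the line $\mathrm{q}_3\mathrm{q}_5$; the assumption $r_{15}\geq r_{13}$ puts $\mathrm{q}_1$ in $\mathrm{II}_{\mathrm{q}_3\mathrm{q}_5}\cup\mathrm{III}_{\mathrm{q}_3\mathrm{q}_5}$, say $\mathrm{II}_{\mathrm{q}_3\mathrm{q}_5}$; the case $\mathrm{q}_2\in\mathrm{I}_{\mathrm{q}_3\mathrm{q}_5}$ is excluded because $\mathrm{q}_5$ also lies in the angular sector at $\mathrm{q}_2$ between $\mathrm{q}_2\mathrm{q}_1$ and $\mathrm{q}_2\mathrm{q}_3$; and then $\mathrm{q}_4$ must lie in $\mathrm{IV}_{\mathrm{q}_3\mathrm{q}_5}$, since otherwise the whole quadrilateral would sit on the side of $\perp_{\mathrm{q}_3\mathrm{q}_5}$ not containing $\mathrm{q}_5$. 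This yields the contradiction. Your preliminary reductions (convexity of the distance function, the subcase $r_{14}\leq r_{15}$ killed by Proposition \ref{exter}) are fine but do not touch the hard regime, which is exactly where you stop.

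For part (2) your route is genuinely different from the paper's and, as sketched, it fails. Applying Proposition \ref{PBTtm} to the diagonal pair $(\mathrm{q}_1,\mathrm{q}_3)$ cannot produce a contradiction, because the hypotheses $r_{12}\geq r_{13}$ and $r_{14}\geq r_{13}$ say nothing about the positions of $\mathrm{q}_2$ and $\mathrm{q}_4$ relative to $\perp_{\mathrm{q}_1\mathrm{q}_3}$. For instance $\mathrm{q}_1=(0,0)$, $\mathrm{q}_3=(1,0)$, $\mathrm{q}_2=(0.9,-1)$, $\mathrm{q}_4=(0.9,1)$ satisfies $r_{12},r_{14}>r_{13}$ with $\mathrm{q}_2$ and $\mathrm{q}_4$ in \emph{adjacent} quadrants of the axes $\mathrm{q}_1\mathrm{q}_3$ and $\perp_{\mathrm{q}_1\mathrm{q}_3}$, so no placement of $\mathrm{q}_5$ in your lens puts all of $\mathrm{q}_2,\mathrm{q}_4,\mathrm{q}_5$ into one diagonal pair; the ``short analysis'' you invoke does not exist with these ingredients. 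The paper instead applies the theorem to the two \emph{sides} $(\mathrm{q}_4,\mathrm{q}_3)$ and $(\mathrm{q}_3,\mathrm{q}_2)$: the hypotheses say exactly that $\perp_{\mathrm{q}_3\mathrm{q}_4}$ crosses the edge $\overline{\mathrm{q}_1\mathrm{q}_4}$ and $\perp_{\mathrm{q}_2\mathrm{q}_3}$ crosses the edge $\overline{\mathrm{q}_1\mathrm{q}_2}$, and the two perpendicular bisector constraints then force $\mathrm{q}_5$ into $\bigl(\mathrm{I}_{\mathrm{q}_4\mathrm{q}_3}\cup\mathrm{III}_{\mathrm{q}_4\mathrm{q}_3}\bigr)\cap\bigl(\mathrm{II}_{\mathrm{q}_3\mathrm{q}_2}\cup\mathrm{IV}_{\mathrm{q}_3\mathrm{q}_2}\bigr)$, which is exterior to the quadrilateral --- contradicting that the convex hull is the quadrilateral. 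You should replace your part (2) by an argument on the sides, and complete the quadrant analysis in part (1).
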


 \begin{figure}[H]
     \centering     \includegraphics[width=0.2\textwidth, angle=0]{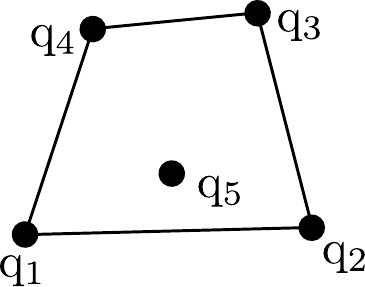}
     \caption{Central configuration whose convex hull is a quadrilateral.}
     \label{quad}
 \end{figure}

\begin{proof}
\begin{enumerate}
 \item Suppose that the inequality $r_{13}>r_{15}$ does not hold. Then $\mathrm{q}_1\in \mathrm{II}_{\mathrm{q}_3\mathrm{q}_5}$ or $\mathrm{q}_1\in \mathrm{III}_{\mathrm{q}_3\mathrm{q}_5}$. We may choose without loss of generality $\mathrm{q}_1\in \mathrm{II}_{\mathrm{q}_3\mathrm{q}_5}$. We also know that $\mathrm{q}_4\in \mathrm{III}_{\mathrm{q}_3\mathrm{q}_5}\bigcup \mathrm{IV}_{\mathrm{q}_3\mathrm{q}_5}$ and $\mathrm{q}_2\in \mathrm{I}_{\mathrm{q}_3\mathrm{q}_5}\bigcup \mathrm{II}_{\mathrm{q}_3\mathrm{q}_5}$, since $\mathrm{q}_5$ belongs to the convex sector delimited by the half-lines $\mathrm{q}_3\mathrm{q}_4$ and $\mathrm{q}_3\mathrm{q}_2$ and the order is counterclockwise.

We cannot have $\mathrm{q}_2\in \mathrm{I}_{\mathrm{q}_3\mathrm{q}_5}$, since $\mathrm{q}_5$ belongs to the convex sector delimited by the half-lines $\mathrm{q}_2\mathrm{q}_3$ and $\mathrm{q}_2\mathrm{q}_1$. The only remaining possibility is $\mathrm{q}_2\in \mathrm{II}_{\mathrm{q}_3\mathrm{q}_5}$. Then $\mathrm{q}_4\in \mathrm{IV}_{\mathrm{q}_3\mathrm{q}_5}$, otherwise, the quadrilateral will be in one side of line $\perp_{\mathrm{q}_3\mathrm{q}_5}$ which $\mathrm{q}_5$ is not in. But $\mathrm{q}_4\in \mathrm{IV}_{\mathrm{q}_3\mathrm{q}_5}$ contradicts the perpendicular bisector theorem.

\item Assume it is not true, then $r_{14}$, $r_{12}\geq r_{13}$ (see Figure \ref{p2.1}).
 \begin{figure}[H]
     \centering   \includegraphics[width=0.27\textwidth, angle=0]{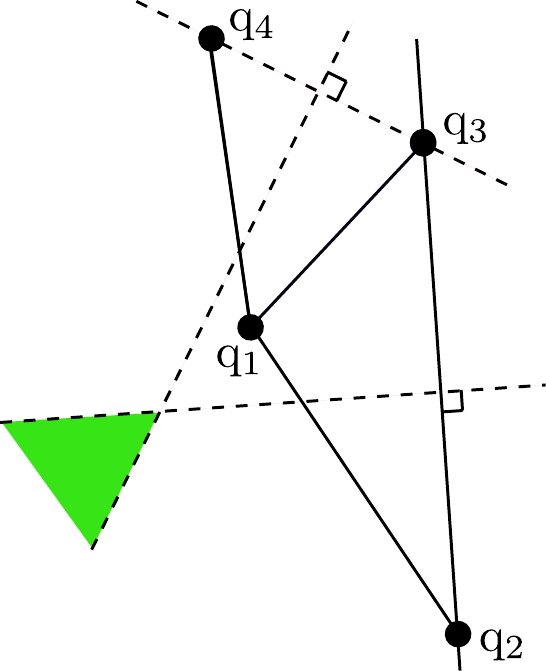}
     \caption{$\mathrm{q}_5$ is in the green region.}
     \label{p2.1}
 \end{figure}
 
 By $r_{14}\geq r_{13}$, $\perp_{\mathrm{q}_3\mathrm{q}_4}$ will pass $\overline{\mathrm{q}_1\mathrm{q}_4}$. By $r_{12}\geq r_{13}$,  $\perp_{\mathrm{q}_2\mathrm{q}_3}$ will pass $\overline{\mathrm{q}_1\mathrm{q}_2}$. So
$$\left ( \mathrm{I}_{\mathrm{q}_4\mathrm{q}_3}\bigcup \mathrm{III}_{\mathrm{q}_4\mathrm{q}_3}\right )\bigcap\left (\mathrm{II}_{\mathrm{q}_3\mathrm{q}_2}\bigcup \mathrm{IV}_{\mathrm{q}_3\mathrm{q}_2}\right )$$
is in the exterior of the convex hull of the 5 points. But $\mathrm{q}_5$ is in this intersection by the perpendicular bisector theorem. This contradicts the assumption that the convex hull of the 5 points is a quadrilateral with vertices $\mathrm{q}_1$, $\mathrm{q}_2$, $\mathrm{q}_3$, $\mathrm{q}_4$.

\end{enumerate}

\end{proof}

\begin{cor}\label{cor2}
    For the central configurations in Proposition\ \ref{prop2},
    $$\Ss_{13}<0\quad\hbox{and}\quad \Ss_{24}<0,\quad\hbox{where}\quad \Ss_{ij}=\frac{1}{r_{ij}^3}-1.$$
\end{cor}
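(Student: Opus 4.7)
The strategy is to prove $\Ss_{13}<0$ by contradiction; the companion inequality $\Ss_{24}<0$ follows by applying the whole argument after the cyclic relabelling $(1,2,3,4)\mapsto (2,3,4,1)$, under which Proposition~\ref{prop2} is invariant. So assume $\Ss_{13}\ge 0$, i.e.\ $r_{13}\le 1$. Proposition~\ref{prop2}(1) both at the pair $(1,3)$ and at its half-turn $(3,1)$ gives $r_{15}<r_{13}\le 1$ and $r_{35}<r_{13}\le 1$, so $\Ss_{15},\Ss_{35}>0$.

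Next I fix the signs at $\mathrm{q}_1$ and $\mathrm{q}_3$. Since $\mathrm{q}_1$ is an extreme point of the convex hull, equation (\ref{EQAg}) at body 1 prevents all the $\Ss_{1j}$ from being simultaneously $\ge 0$ (otherwise $\mathrm{q}_1$ would lie in the convex hull of $\{\mathrm{q}_2,\mathrm{q}_3,\mathrm{q}_4,\mathrm{q}_5\}$). Combined with $\Ss_{13},\Ss_{15}\ge 0$ and with Proposition~\ref{prop2}(2), which gives at least one of $\Ss_{12},\Ss_{14}$ strictly positive, this forces exactly one of $\Ss_{12},\Ss_{14}$ to be positive and the other strictly negative; the same reasoning at $\mathrm{q}_3$ shows one of $\Ss_{23},\Ss_{34}$ is positive and the other negative. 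The reflection symmetry $2\leftrightarrow 4$ of the hypothesis lets me assume $\Ss_{12}>0$, $\Ss_{14}<0$. Two cases remain: \emph{Case A}, $\Ss_{23}>0,\Ss_{34}<0$; \emph{Case B}, $\Ss_{34}>0,\Ss_{23}<0$.

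In each case I split further according to which side of the diagonal $\mathrm{q}_1\mathrm{q}_3$ the interior body $\mathrm{q}_5$ lies on, and apply the Disk Sector Theorem (Proposition~\ref{DSTtm}) with a suitable centre and line. For Case~A with $\mathrm{q}_5$ on the $\mathrm{q}_2$-side, and for Case~B with $\mathrm{q}_5$ on the $\mathrm{q}_2$-side, I use centre $\mathrm{q}_1$ and line $\mathrm{q}_1\mathrm{q}_3$: the $\mathrm{q}_2$-side carries only bodies at distance $<1$ from $\mathrm{q}_1$, while the $\mathrm{q}_4$-side carries only the body $\mathrm{q}_4$ at distance $>1$, so one of the two ``similar domains'' is empty and the other is not---contradiction. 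For Case~A with $\mathrm{q}_5$ on the $\mathrm{q}_4$-side, I replace the diagonal by the line $\mathrm{q}_1\mathrm{q}_5$ through $\mathrm{q}_1$; convexity of the quadrilateral makes the angular positions of $\mathrm{q}_2,\mathrm{q}_3,\mathrm{q}_5,\mathrm{q}_4$ around $\mathrm{q}_1$ occur in this cyclic order within an arc of length $<\pi$, so this line places the bodies $\mathrm{q}_2,\mathrm{q}_3$ (both at distance $<1$ from $\mathrm{q}_1$) on one side and $\mathrm{q}_4$ (at distance $>1$) on the other---same contradiction. For Case~B with $\mathrm{q}_5$ on the $\mathrm{q}_4$-side, I centre the theorem at $\mathrm{q}_3$ with line $\mathrm{q}_1\mathrm{q}_3$ instead: the $\mathrm{q}_2$-side carries only the far body $\mathrm{q}_2$, the $\mathrm{q}_4$-side only the close bodies $\mathrm{q}_4,\mathrm{q}_5$, again contradicting Proposition~\ref{DSTtm}.

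The main subtlety is the choice of line in Case~A with $\mathrm{q}_5$ on the $\mathrm{q}_4$-side: the diagonal $\mathrm{q}_1\mathrm{q}_3$ does not separate close bodies from far ones at $\mathrm{q}_1$ there, and one has to pass the line through $\mathrm{q}_5$ instead, exploiting the angular order of the bodies around $\mathrm{q}_1$. Boundary situations with $r_{ij}=1$ exactly, or with a body lying on the chosen line, are harmless since the four sectors of Proposition~\ref{DSTtm} are open.
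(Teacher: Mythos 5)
Your argument is correct, and its opening coincides with the paper's: assume $\Ss_{13}\ge 0$, deduce $\Ss_{15}>0$ from Proposition~\ref{prop2}~(1) and $\Ss_{12}>0$ or $\Ss_{14}>0$ from Proposition~\ref{prop2}~(2), then seek a contradiction with the disk sector theorem centered at $\mathrm{q}_1$. The divergence is only in the choice of the line through $\mathrm{q}_1$, and it costs you most of your length. The paper takes the line $\mathrm{q}_1\mathrm{q}_4$ (a \emph{side} of the convex hull) in the case $\Ss_{12}>0$: then $\mathrm{q}_2$, $\mathrm{q}_3$, $\mathrm{q}_5$ all lie strictly on one side of this line and all at distance at most $1$ from $\mathrm{q}_1$, while the other open side contains no body at all, so Proposition~\ref{DSTtm} concludes in one stroke --- no information is needed about the sign of $\Ss_{14}$, about the distances from $\mathrm{q}_3$, or about which side of the diagonal $\mathrm{q}_5$ occupies. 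By choosing instead the diagonal $\mathrm{q}_1\mathrm{q}_3$ (then a line through $\mathrm{q}_5$, then recentering at $\mathrm{q}_3$), you are forced to pin down the sign of $\Ss_{14}$, to repeat the sign analysis at $\mathrm{q}_3$, and to split on the position of $\mathrm{q}_5$. Those extra steps are all sound: your convex-combination argument at the extreme point $\mathrm{q}_1$ (equation~(\ref{EQAg}) with all $\Ss_{1j}\ge 0$ and one strictly positive would place $\mathrm{q}_1$ in the convex hull of the other bodies) is a legitimate variant of the supporting-line trick the paper itself uses in Proposition~\ref{prop2or3}, though as stated it should say ``all $\ge 0$ with at least one $>0$'' --- harmless here since $\Ss_{15}>0$ is already known. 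What your longer route buys is the full sign pattern of the $\Ss_{ij}$ around $\mathrm{q}_1$ and $\mathrm{q}_3$ under the absurd hypothesis, which is more than the corollary requires; if you notice that a side of the quadrilateral is a supporting line at $\mathrm{q}_1$ separating nothing from everything, the whole case analysis collapses.
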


\begin{proof}
    By renumbering the bodies, $\Ss_{24}<0$ reduces to $\Ss_{13}<0$. Assume the latter is not true. Together with $r_{13}>r_{15}$ in Proposition \ref{prop2} (1), this gives $\Ss_{15}> \Ss_{13}\geq 0$. Suppose for example that Proposition \ref{prop2} (2) gives $r_{13}>r_{12}$, which implies $\Ss_{12}> \Ss_{13}\geq 0$. Proposition \ref{DSTtm} applied to $\mathrm{q}_1$ and the line $\mathrm{q}_1\mathrm{q}_4$ then gives a contradiction.

\end{proof}

\bigskip

\section{Proof of Williams' inequality}

\bigskip

To prove Theorem \ref{mthm}, it is enough to prove what may be called Williams' inequality, that is
\begin{equation}\label{l}
\nu>0,
\end{equation}
where $\nu$ is the number which appears in identities (\ref{EQW}). According to (\ref{EQX}), the nonzero eigenvalues $\nu_1$, $\nu_2$ of $\Zs$ then satisfy $\nu_1\nu_2>0$. According to a mentioned result of \cite{Moe}, they also satisfy $\nu_1+\nu_2>0$ and are consequently positive. Replacing $\Zs$ by $\Z$, we get the statement of Theorem \ref{mthm}.

The expression $\Ss_{ik}\Ss_{jl}-\Ss_{il}\Ss_{jk}=\nu m_h\Delta_{ijh}\Delta_{klh}$ of identities (\ref{EQW}) and Williams' inequality imply that $\Ss_{ik}\Ss_{jl}-\Ss_{il}\Ss_{jk}$ and $\Delta_{ijh}\Delta_{klh}$ have same sign. More precisely, they are both positive, both negative or both zero. This is true whatever the permutation $12345\mapsto hijkl$. Reciprocally, if there is a single permutation $12345\mapsto hijkl$ such that $\Ss_{ik}\Ss_{jl}-\Ss_{il}\Ss_{jk}$ and $\Delta_{ijh}\Delta_{klh}$ are both positive, then (\ref{l}) is true.

For a strictly convex central configuration, we assume $\Delta_{123}>0$, then we have $\Delta_{ijk}>0$ for any $1\leq i<j<k\leq 5$. From Theorem \ref{cov}, we can deduce that (\ref{l}) is true. So to prove Theorem \ref{mthm}, there remains to prove it for the concave central configurations.

\begin{prop}\label{protri}
    Williams' inequality (\ref{l}) is true for any 5-body central configuration with triangular convex hull.
    
\end{prop}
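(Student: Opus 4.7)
The plan is to locate an even permutation $(h,i,j,k,l)$ in (\ref{EQW}) for which both the area product $\Delta_{ijh}\Delta_{klh}$ and the $\Ss$-minor $\Ss_{ik}\Ss_{jl}-\Ss_{il}\Ss_{jk}$ are strictly positive. I first read off the signs of oriented areas: with $\Delta_{123}>0$ fixed by orientation, the interiority of $\mathrm{q}_4,\mathrm{q}_5$ in $\mathrm{q}_1\mathrm{q}_2\mathrm{q}_3$ gives $\Delta_{124},\Delta_{125}>0$, and the CCW convexity of $\mathrm{q}_1\mathrm{q}_2\mathrm{q}_5\mathrm{q}_4$ places $\mathrm{q}_3$ on the positive side of $\overline{\mathrm{q}_4\mathrm{q}_5}$, so $\Delta_{345}>0$. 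Specialising (\ref{EQW}) to $\{i,j\}=\{1,2\}$ with $h\in\{3,4,5\}$ (appropriate even permutations) yields
\begin{equation*}
\Ss_{14}\Ss_{25}-\Ss_{15}\Ss_{24}=\nu m_3\Delta_{123}\Delta_{345},\ \ \Ss_{15}\Ss_{23}-\Ss_{13}\Ss_{25}=\nu m_4\Delta_{124}\Delta_{345},\ \ \Ss_{13}\Ss_{24}-\Ss_{14}\Ss_{23}=\nu m_5\Delta_{125}\Delta_{345},
\end{equation*}
all with strictly positive coefficients of $\nu$. Thus $\nu>0$ iff any one (equivalently all) of these three $\Ss$-minors is strictly positive.

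Using $\Ss_{14}>\Ss_{15}$ and $\Ss_{25}>\Ss_{24}$ from Proposition \ref{prop1}, the decompositions
\[
\Ss_{14}\Ss_{25}-\Ss_{15}\Ss_{24}=(\Ss_{14}-\Ss_{15})\Ss_{25}+\Ss_{15}(\Ss_{25}-\Ss_{24})=(\Ss_{14}-\Ss_{15})\Ss_{24}+\Ss_{14}(\Ss_{25}-\Ss_{24})
\]
show that the first minor is strictly positive whenever either $\min(\Ss_{15},\Ss_{25})\geq 0$ or $\min(\Ss_{14},\Ss_{24})\geq 0$, which handles the generic case.

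The residual case is $\Ss_{14}<0$ and $\Ss_{25}<0$ (the companion case follows by the $4\leftrightarrow 5$, $1\leftrightarrow 2$ symmetry). By Proposition \ref{prop1} this forces $r_{13},r_{14},r_{15},r_{23},r_{24},r_{25}>1$, so those six $\Ss$-values are strictly negative. Because $\mathrm{q}_1$ is a vertex of the convex hull, the equilibrium equation (\ref{EQAg}) at $\mathrm{q}_1$ cannot have all of $\Ss_{12},\Ss_{13},\Ss_{14},\Ss_{15}$ negative (else $\mathrm{q}_1$ would be a positive convex combination of $\mathrm{q}_2,\ldots,\mathrm{q}_5$), which forces $\Ss_{12}\geq 0$. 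Three further even permutations of (\ref{EQW}), namely those yielding the $\Ss$-minors $\Ss_{12}\Ss_{45}-\Ss_{14}\Ss_{25}$, $\Ss_{12}\Ss_{35}-\Ss_{15}\Ss_{23}$, and $\Ss_{12}\Ss_{34}-\Ss_{24}\Ss_{13}$ (each paired with a strictly negative area product), are incompatible with the residual conclusion $\nu<0$ unless $\Ss_{45},\Ss_{35},\Ss_{34}>0$. Thus in the residual case $r_{12},r_{34},r_{35},r_{45}<1$ while $r_{1j},r_{2j}>1$ for $j\in\{3,4,5\}$. Now the CCW-convex quadrilateral $\mathrm{q}_1\mathrm{q}_2\mathrm{q}_5\mathrm{q}_4$ together with $\Delta_{345}>0$ puts the cyclic angular order of $\mathrm{q}_1,\mathrm{q}_2,\mathrm{q}_3,\mathrm{q}_5$ around $\mathrm{q}_4$ equal to $\mathrm{q}_5,\mathrm{q}_3,\mathrm{q}_1,\mathrm{q}_2$; in particular $\{\mathrm{q}_3,\mathrm{q}_5\}$ (both within unit distance of $\mathrm{q}_4$) and $\{\mathrm{q}_1,\mathrm{q}_2\}$ (both beyond unit distance) occupy contiguous angular arcs, so some line through $\mathrm{q}_4$ strictly separates them. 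Applying the disk-sector theorem (Proposition \ref{DSTtm}) to this line at $\mathrm{q}_4$: all four bodies lie in the "close-on-one-side $\cup$ far-on-the-other" domain while its complementary domain is empty, contradicting the theorem and ruling out the residual case. The main obstacle is precisely this chain of refinements: the residual case is invisible to any single $\Ss$-minor decomposition, and must be excluded by combining several identities from (\ref{EQW}) with the extreme-point force-balance argument and the disk-sector theorem.
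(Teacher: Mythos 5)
There is a genuine gap in your case division. From your two decompositions you correctly conclude that the minor $\Ss_{14}\Ss_{25}-\Ss_{15}\Ss_{24}$ is positive when $\min(\Ss_{15},\Ss_{25})\geq 0$ or $\min(\Ss_{14},\Ss_{24})\geq 0$; but the negation of this condition is ``($\Ss_{15}<0$ or $\Ss_{25}<0$) and ($\Ss_{14}<0$ or $\Ss_{24}<0$)'', which is much larger than the single pattern $\Ss_{14}<0,\ \Ss_{25}<0$ that you analyse (nor is the rest reached by the $1\leftrightarrow 2$, $4\leftrightarrow 5$ symmetry, which fixes that pattern and also fixes the pattern below). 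In particular the mixed case $\Ss_{15}<0\leq\Ss_{14}$ and $\Ss_{24}<0\leq\Ss_{25}$ satisfies neither of your sufficient conditions and is not your residual case, so it is never treated. This omission is fatal, because that mixed case is precisely the only one that can actually occur: Proposition \ref{DSTtm} applied to $\mathrm{q}_1$ and the line $\mathrm{q}_1\mathrm{q}_2$, combined with Proposition \ref{prop1}(2)(3), forces $\Ss_{14}>0$ and $\Ss_{25}>0$ outright, so the case you work hard on ($\Ss_{14}<0$, $\Ss_{25}<0$) is vacuous, while the surviving case $\Ss_{15}<0<\Ss_{14}$, $\Ss_{24}<0<\Ss_{25}$ is where the real difficulty lies. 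The paper eliminates it by invoking Proposition \ref{prop2or3} to pin down the sign of $\Ss_{13}$ or $\Ss_{23}$, feeding that into a second reversed inequality to determine all the signs $\Ss_{12}<0$, $\Ss_{13}>0$, $\Ss_{14}>0$, $\Ss_{15}<0$, and then applying Proposition \ref{DSTtm} once more to the line $\mathrm{q}_1\mathrm{q}_4$; none of this machinery appears in your argument, and your disk-sector separation of $\{\mathrm{q}_3,\mathrm{q}_5\}$ from $\{\mathrm{q}_1,\mathrm{q}_2\}$ around $\mathrm{q}_4$ does not transfer to it, since there the two ``near'' bodies and the two ``far'' bodies are not the same pairs.

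Two secondary points. First, your opening reduction rests on $\Delta_{345}>0$, but the convention of Figure \ref{tri} only guarantees $\Delta_{345}\geq 0$ (the paper explicitly allows three collinear bodies, in which case your three chosen instances of (\ref{EQW}) degenerate to $0=0$ and say nothing about $\nu$); the paper's contrapositive formulation, which assumes $\nu\leq 0$ and derives weak reversed inequalities before injecting strict metric information from Propositions \ref{prop1}, \ref{prop2or3} and \ref{DSTtm}, is designed to survive these degeneracies. Second, even in your stated generic case, $\min(\Ss_{15},\Ss_{25})\geq 0$ gives only $\Ss_{14}\Ss_{25}-\Ss_{15}\Ss_{24}\geq 0$ when $\Ss_{15}=\Ss_{25}=0$, not the strict positivity you claim. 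The geometric ingredients you use (the disk-sector theorem, the extreme-point force-balance observation, which is itself a special case of Proposition \ref{DSTtm}) are the right ones, but the proof as written does not close.
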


\begin{cor}\label{lemtri}
    For a central configuration with a triangular convex hull, we number the points as in Figure \ref{tri}, that is, such that this triangle is $\mathrm{q}_1\mathrm{q}_2\mathrm{q}_3$ and such that $\mathrm{q}_1$, $\mathrm{q}_2$, $\mathrm{q}_5$, $\mathrm{q}_4$ in counterclockwise order form a convex configuration. The following inequalities hold:
    \begin{align}
    &\Ss_{23}\Ss_{45}\ \leq\ \Ss_{24}\Ss_{35}\ \leq\ \Ss_{25}\Ss_{34},\\
    &\Ss_{13}\Ss_{45}\ \leq\ \Ss_{15}\Ss_{34}\ \leq\ \Ss_{14}\Ss_{35},\\
    &\Ss_{12}\Ss_{45}\ \leq\ \Ss_{15}\Ss_{24}\ \leq\ \Ss_{25}\Ss_{14},\\
    &\Ss_{13}\Ss_{25}\  \leq\ \Ss_{12}\Ss_{35}\ \leq\ \Ss_{15}\Ss_{23},\\
    &\Ss_{23}\Ss_{14}\ \leq\ \Ss_{12}\Ss_{34}\ \leq\ \Ss_{13}\Ss_{24}.  
\end{align}
At least one of these inequalities is strict. Moreover, if any 3 of the 5 points are noncollinear, then all the inequalities are strict.
    
\end{cor}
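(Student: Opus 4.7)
The strategy is to convert each inequality into a statement about the sign of a product of signed triangle areas, via identity (\ref{EQW}). Since $\nu>0$ by Proposition \ref{protri} and the masses are positive,
\[\Ss_{ik}\Ss_{jl}-\Ss_{il}\Ss_{jk}=\nu m_h\Delta_{ijh}\Delta_{klh}\]
has the same sign as $\Delta_{ijh}\Delta_{klh}$. Each of the ten inequalities composing the five chains thus reduces to checking that the corresponding product of areas has the required sign, and every such sign is determined by the configuration data of Figure \ref{tri}: the counterclockwise orientation of the main triangle $\mathrm{q}_1\mathrm{q}_2\mathrm{q}_3$, the interior position of $\mathrm{q}_4$ and $\mathrm{q}_5$, and the counterclockwise order of the convex quadrilateral with vertices $\mathrm{q}_1$, $\mathrm{q}_2$, $\mathrm{q}_5$, $\mathrm{q}_4$.

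For each chain I would fix the missing index $h$ (one chain per $h\in\{1,2,3,4,5\}$), expand both successive differences via (\ref{EQW}), and read off the orientations of the resulting triangles. For instance the first chain yields
\[\Ss_{24}\Ss_{35}-\Ss_{23}\Ss_{45}=\nu m_1\Delta_{251}\Delta_{431},\qquad \Ss_{25}\Ss_{34}-\Ss_{24}\Ss_{35}=\nu m_1\Delta_{231}\Delta_{541}.\]
All four signed areas $\Delta_{251},\Delta_{541},\Delta_{231},\Delta_{431}$ are positive: the first two because $\mathrm{q}_1,\mathrm{q}_2,\mathrm{q}_5$ and $\mathrm{q}_5,\mathrm{q}_4,\mathrm{q}_1$ are three consecutive vertices of the counterclockwise convex quadrilateral; the third because $\mathrm{q}_1\mathrm{q}_2\mathrm{q}_3$ is counterclockwise; the fourth because $\mathrm{q}_4$ lies on the counterclockwise (interior) side of the directed edge $\mathrm{q}_3\mathrm{q}_1$ of the main triangle. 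Both products are therefore positive, proving the first chain. The remaining four chains are settled by the same reading of orientations from the figure, uniformly for each value of $h$.

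For the strictness claims, the main triangle is proper, so $\Delta_{123}\neq 0$, and no three of the four vertices of the strictly convex quadrilateral $\mathrm{q}_1,\mathrm{q}_2,\mathrm{q}_5,\mathrm{q}_4$ are collinear, so in particular $\Delta_{541}\neq 0$. Hence $\Delta_{231}\Delta_{541}\neq 0$, which already forces the strict inequality $\Ss_{24}\Ss_{35}<\Ss_{25}\Ss_{34}$ and shows that at least one of the ten listed inequalities is always strict. If in addition no three of the five points are collinear, every $\Delta_{ijh}$ is nonzero, every product of areas appearing via (\ref{EQW}) is nonzero, and each of the ten weak inequalities becomes strict.

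The principal obstacle is purely bookkeeping: one must correctly match each of the ten inequalities to its expansion via (\ref{EQW}) and then read off the orientation of each of the resulting triangles from the two combinatorial data of Figure \ref{tri}. Once this dictionary is in place the sign verification is uniform across the five chains, and the full statement of the corollary follows.
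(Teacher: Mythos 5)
Your overall strategy is exactly the paper's: use identity (\ref{EQW}) together with Williams' inequality $\nu>0$ from Proposition \ref{protri} to rewrite each difference $\Ss_{ik}\Ss_{jl}-\Ss_{il}\Ss_{jk}$ as $\nu m_h\Delta_{ijh}\Delta_{klh}$, and read the sign of the area product off Figure \ref{tri}. Your bookkeeping for the first chain is correct, and the ten weak inequalities do follow this way once ``positive'' is weakened to ``nonnegative'' for the areas involving $\mathrm{q}_4$ or $\mathrm{q}_5$.

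The gap is in the degenerate cases, which is precisely the part of the statement that separates $\leq$ from $<$. You assert that $\Delta_{431}$ and $\Delta_{541}$ are strictly positive, the latter because ``no three of the four vertices of the strictly convex quadrilateral $\mathrm{q}_1,\mathrm{q}_2,\mathrm{q}_5,\mathrm{q}_4$ are collinear.'' But the hypothesis only says these four points in counterclockwise order form a \emph{convex} configuration, not a strictly convex one (the paper's definitions distinguish the two), so $\mathrm{q}_4$ may lie on the segment $\overline{\mathrm{q}_1\mathrm{q}_5}$, making $\Delta_{541}=\Delta_{154}=0$; likewise $\mathrm{q}_4$ may lie on the edge $\overline{\mathrm{q}_1\mathrm{q}_3}$ of the convex hull, making $\Delta_{431}=\Delta_{143}=0$. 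This is why the paper records only $\Delta_{123},\Delta_{124},\Delta_{125}>0$ and every other relevant area as $\geq 0$, and why the corollary carries the explicit collinearity caveat. As a consequence, your argument for ``at least one inequality is strict,'' which rests on the single product $\Delta_{231}\Delta_{541}$ being nonzero, does not go through. The paper's argument is different and robust: since the five points are not all on one line, not all ten products $\Delta_{ijh}\Delta_{klh}$ can vanish simultaneously, so at least one of the ten inequalities is strict — without one being able to specify in advance which. Your final sentence (all areas nonzero implies all inequalities strict) is fine.
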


 \begin{proof}
 
For the central configurations in Corollary \ref{lemtri}, we assume $\Delta_{123}>0$, then from Figure \ref{tri}, we have $\Delta_{124}$, $\Delta_{125}>0$, $\Delta_{234}$, $\Delta_{235}$, $\Delta_{345}$, $\Delta_{143}$, $\Delta_{153}$, $\Delta_{154}$, $\Delta_{254}\geq 0$. Then (\ref{EQW}) and (\ref{l}) imply the inequalities in Corollary \ref{lemtri}. Not all the $\Delta_{ijh}\Delta_{klh}$'s are zero, so (\ref{l}) implies that at least one inequality is strict. If all the triangular areas are nonzero, all the inequalities are strict.
 
 \end{proof}
       
\begin{proof}[Proof of Proposition \ref{protri}]
 We assume the contrary, namely $\nu\leq 0$. The corollary of $\nu\leq 0$ is similar to the corollary \ref{lemtri} of $\nu>0$, except that the inequalities are in reversed order and that we cannot deduce any strict inequality. We adopt the numbering convention of Corollary \ref{lemtri} and we get in particular the last three inequalities in reverse order: 
    \begin{align}
    &\Ss_{12}\Ss_{45}\  \geq \Ss_{15}\Ss_{24}\  \geq \Ss_{25}\Ss_{14}, \label{f5}\\
    &\Ss_{13}\Ss_{25}\  \geq \Ss_{12}\Ss_{35}\  \geq \Ss_{15}\Ss_{23}, \label{f4}\\
    &\Ss_{23}\Ss_{14}\  \geq \Ss_{12}\Ss_{34}\  \geq \Ss_{13}\Ss_{24}. \label{f3}
\end{align}
By Proposition \ref{prop1} $(2)\ (3)$,  $r_{15}>r_{14}$ and  $r_{13}>r_{14}$,  so $\Ss_{15}<\Ss_{14}$ and  $\Ss_{13}<\Ss_{14}$. By Proposition \ref{DSTtm} applied to $\mathrm{q}_1$ and the line $\mathrm{q}_1\mathrm{q}_2$, we have $\Ss_{14}>0$. Similarly, $\Ss_{24}<\Ss_{25}$,  $\Ss_{23}<\Ss_{25}$ and $\Ss_{25} > 0$.  The inequality $\Ss_{25}\Ss_{14}\ \leq\ \Ss_{15}\Ss_{24}$ in (\ref{f5}) is incompatible with $\Ss_{15}<\Ss_{14}$ and $\Ss_{24}<\Ss_{25}$ except maybe if $\Ss_{15}<0$ and $\Ss_{24}<0$. According to Proposition \ref{prop2or3}, $\Ss_{23}<0$ or $\Ss_{13}<0$. Suppose for example $\Ss_{23}<0$. Then (\ref{f4}) gives $\Ss_{13}>0$. Proposition \ref{prop2or3} now gives $\Ss_{12}<0$ and we have the signs of all the $\Ss_{1j}$'s. We apply Proposition \ref{DSTtm} to $\mathrm{q}_1$ and for example the line $\mathrm{q}_1\mathrm{q}_4$ and get a contradiction. So $\nu>0$.

\end{proof}

\begin{prop}\label{proquad}
    Williams' inequality (\ref{l}) is true for any 5-body central configuration with quadrilateral convex hull.
    
\end{prop}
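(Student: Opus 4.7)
The plan mirrors that of Proposition \ref{protri}. Suppose for contradiction $\nu\leq 0$. After a cyclic relabeling of the four hull vertices we may place $\mathrm{q}_5$ in the triangle $\mathrm{q}_1\mathrm{q}_2P$, where $P=\mathrm{q}_1\mathrm{q}_3\cap\mathrm{q}_2\mathrm{q}_4$ is the intersection of the diagonals. This normalization fixes $\Delta_{135}<0$ and $\Delta_{245}>0$, with every other $\Delta_{ijh}$ entering our use of identity (\ref{EQW}) positive in its natural counterclockwise orientation. Corollary \ref{cor2} gives $\Ss_{13},\Ss_{24}<0$, so $\Ss_{13}\Ss_{24}>0$. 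Applying (\ref{EQW}) with $h=5$ to the three partitions of $\{1,2,3,4\}$ into two pairs then yields
\[
\Ss_{12}\Ss_{34}\;\geq\;\Ss_{14}\Ss_{23}\;\geq\;\Ss_{13}\Ss_{24}\;>\;0,
\]
so $\Ss_{14}$ and $\Ss_{23}$ share a sign, as do $\Ss_{12}$ and $\Ss_{34}$. Four subcases remain.

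Consider first the ``mixed-sign'' subcase, say $\Ss_{14},\Ss_{23}>0$ and $\Ss_{12},\Ss_{34}<0$. Apply Proposition \ref{DSTtm} at each hull vertex with the diagonal through that vertex as separating line. Because $\mathrm{q}_5\in\mathrm{q}_1\mathrm{q}_2P$ sits on a determined side of each diagonal, the two non-$\mathrm{q}_5$ contributions have signs already known from the step above, and balance of the perpendicular equation forces $\Ss_{15},\Ss_{25}>0$ together with $\Ss_{35},\Ss_{45}<0$. Choose coordinates with edge $\mathrm{q}_1\mathrm{q}_2$ along the $x$-axis and $\mathrm{q}_5$ above it; because $\mathrm{q}_5\in\mathrm{q}_1\mathrm{q}_2P$ one has $y_5<\min(y_3,y_4)$. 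Projecting the equilibrium equation $0=\sum_j m_j\Ss_{5j}(\mathrm{q}_j-\mathrm{q}_5)$ onto the $y$-axis makes every one of the four terms strictly negative: the $j=1,2$ terms because $\Ss_{j5}>0$ and $(\mathrm{q}_j-\mathrm{q}_5)_y<0$, and the $j=3,4$ terms because $\Ss_{j5}<0$ and $(\mathrm{q}_j-\mathrm{q}_5)_y>0$. This is the desired contradiction. The opposite mixed subcase ($\Ss_{14},\Ss_{23}<0$ and $\Ss_{12},\Ss_{34}>0$) is entirely symmetric and yields a projection in which every term is strictly positive.

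The two ``uniform-sign'' subcases are the main obstacle. When all four hull edges have $\Ss_{ij}<0$, Proposition \ref{DSTtm} applied at each hull vertex with its adjacent hull edge (all other bodies are on the interior side) forces $\Ss_{i5}>0$ for every $i\in\{1,2,3,4\}$; one then combines the cyclic reversed identities coming from (\ref{EQW}) at $h=3$ with the diagonal-dominance inequalities of Proposition \ref{prop2}(1) to derive a contradiction analogous to that of Proposition \ref{protri}. When all four hull edges have $\Ss_{ij}>0$, these cyclic identities give implications such as $\Ss_{25}>0\Rightarrow \Ss_{15}<0$ together with its three cyclic shifts, which confine $(\Ss_{15},\Ss_{25},\Ss_{35},\Ss_{45})$ to a short list of sign patterns. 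Each surviving pattern is eliminated by rewriting the equilibrium equation at $\mathrm{q}_5$ as an equality of two positive combinations of vectors $\mathrm{q}_j-\mathrm{q}_5$ and invoking the cone geometry of $\mathrm{q}_5\in\mathrm{q}_1\mathrm{q}_2P\subset\mathrm{q}_1\mathrm{q}_2\mathrm{q}_4$: because $\mathrm{q}_5$ is not in the triangle $\mathrm{q}_2\mathrm{q}_3\mathrm{q}_4$, the positive cone spanned by $\{\mathrm{q}_2-\mathrm{q}_5,\mathrm{q}_3-\mathrm{q}_5,\mathrm{q}_4-\mathrm{q}_5\}$ is a proper sector of $\mathbb{R}^2$ that cannot contain the ray $\mathrm{q}_1-\mathrm{q}_5$ (and similarly for cyclic variants), which is the cone inclusion demanded by the rewritten equation. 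Propositions \ref{prop2}(2) and \ref{exter} may be called upon to rule out degenerate boundary alignments where the cone argument becomes tangential.
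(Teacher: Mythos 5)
Your reduction to four sign subcases is legitimate as far as it goes: the normalization $\mathrm{q}_5\in\mathrm{q}_1\mathrm{q}_2\mathrm{P}$ does fix $\Delta_{135}\leq 0$, $\Delta_{245}\geq 0$, hence under $\nu\leq 0$ the chain $\Ss_{12}\Ss_{34}\geq\Ss_{14}\Ss_{23}\geq\Ss_{13}\Ss_{24}>0$, and your two mixed-sign subcases are plausibly closed by the sign-propagation through Proposition \ref{DSTtm} along the diagonals followed by the projection at $\mathrm{q}_5$ (though you should still treat the degenerate position of $\mathrm{q}_5$ on a diagonal, where the $\mathrm{q}_5$-term of the projected equation vanishes). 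The genuine gap is in the two uniform-sign subcases, which you dispose of in a few sentences of intention rather than argument. These are precisely the hard cases: the paper's own proof shows that the assumption $\nu\leq 0$ funnels every configuration into ``all four exterior sides short'' ($\Ss_{12},\Ss_{23},\Ss_{34},\Ss_{14}>0$), and eliminating that situation takes the bulk of its work --- a dichotomy on the signs of the $\Ss_{i5}$ driven by the reversed identities with $h=4$ and $h=2$, ending in $\mathrm{q}_5$ being strictly exterior (via Proposition \ref{exter}) to both triangles cut off by a diagonal, which is impossible. ``Each surviving pattern is eliminated by \dots the cone geometry'' is a placeholder for exactly this analysis, not a proof of it.

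The all-negative subcase is also not as innocuous as your sketch suggests. Your DST step does give $\Ss_{i5}>0$ for all $i$, but the resulting geometric picture --- four vertices pairwise farther than the intermediate distance $1$, all within distance $1$ of $\mathrm{q}_5$ --- is realizable (an inscribed square around its center, say), so no contradiction comes from Proposition \ref{exter} or from cone positions; it must be extracted from the reversed identities, and ``combine the cyclic reversed identities at $h=3$ with Proposition \ref{prop2}(1)'' does not exhibit it (the inequalities of Proposition \ref{prop2}(1) compare a diagonal with $r_{i5}$ and carry no information when the two sides of the comparison already have opposite signs of $\Ss$). Note that the paper avoids this subcase entirely: its first stage, built on Proposition \ref{prop2}(2) and the adjacent/opposite dichotomy for the two shortest exterior sides, shows directly that two opposite sides must satisfy $\Ss>0$, so the all-negative pattern never arises. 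Either import that argument to kill your third subcase, or supply the missing computation; as written, two of your four subcases are unproven.
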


\begin{cor}\label{lemquad}
For a central configuration with a quadrilateral convex hull, we number counterclockwise the points at the vertices of the convex hull as $\mathrm{q}_1$, $\mathrm{q}_2$, $\mathrm{q}_3$, $\mathrm{q}_4$ (see Figure \ref{quad}). The following inequalities hold:
\begin{align}
    &\Ss_{23}\Ss_{45}\ \geq\ \Ss_{35}\Ss_{24}\ \leq\ \Ss_{25}\Ss_{34},\\
    &\Ss_{15}\Ss_{34}\ \geq\ \Ss_{13}\Ss_{45}\ \leq\ \Ss_{14}\Ss_{35},\\
    &\Ss_{14}\Ss_{25}\ \geq\ \Ss_{15}\Ss_{24}\ \leq\ \Ss_{12}\Ss_{45},\\
    &\Ss_{12}\Ss_{35}\ \geq\ \Ss_{13}\Ss_{25}\ \leq\ \Ss_{15}\Ss_{23},\\
    &\Ss_{12}\Ss_{34}\ \leq\ \Ss_{13}\Ss_{24}\ \geq\ \Ss_{14}\Ss_{23}.
\end{align}
At least one of these inequalities is strict. Moreover, if any 3 of the 5 points are noncollinear, then all the inequalities are strict.
\end{cor}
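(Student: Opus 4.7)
The plan is to mimic the proof of Corollary \ref{lemtri}. First, fix the orientation by assuming $\Delta_{123}>0$; with the counterclockwise numbering $\mathrm{q}_1,\mathrm{q}_2,\mathrm{q}_3,\mathrm{q}_4$ of the convex hull and $\mathrm{q}_5$ lying in its interior, I read off the signs of the relevant oriented triangle areas: $\Delta_{124},\Delta_{134},\Delta_{234}>0$ from the counterclockwise order on the hull, and $\Delta_{125},\Delta_{235},\Delta_{345},\Delta_{415}\geq 0$ from the fact that $\mathrm{q}_5$ is interior (equivalently $\Delta_{145}\leq 0$). The signs of $\Delta_{135}$ and $\Delta_{245}$ are not determined by the convex-hull data alone, since they depend on which of the four sub-triangles cut out by the diagonals of the quadrilateral contains $\mathrm{q}_5$.

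Next, for each of the ten pairwise inequalities in the statement, I apply identity (\ref{EQW}) in the form
\[\Ss_{ab}\Ss_{cd}-\Ss_{ad}\Ss_{bc}=\nu\,m_h\,\Delta_{ach}\Delta_{bdh},\]
where $h\in\{1,\dots,5\}$ is the absent index. For each fixed $h$ there are three pairwise comparisons among the three Williams products $\Ss_{ab}\Ss_{cd}$, $\Ss_{ac}\Ss_{bd}$, $\Ss_{ad}\Ss_{bc}$; precisely one of the three associated triangle pairs involves $\Delta_{135}$ or $\Delta_{245}$, and that is exactly the comparison the corollary omits. The two remaining pairs each consist either of two strictly positive triangle areas, or of one positive area paired with $\Delta_{145}$ (negative). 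Combined with Proposition \ref{proquad} ($\nu>0$), this pins down the sign of each difference and reproduces the pattern $A\geq B\leq C$ (or, in row $(5)$, $A\leq B\geq C$) stated in the corollary.

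The strictness claims follow exactly as for Corollary \ref{lemtri}: not all the products $\Delta_{ach}\Delta_{bdh}$ appearing on the right can vanish at once (for instance $\Delta_{125}\Delta_{345}$ is strictly positive for any $\mathrm{q}_5$ interior to the quadrilateral), so at least one inequality is strict; and if no three of the five points are collinear then every $\Delta_{ijh}$ is nonzero, so every listed inequality becomes strict.

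The main bookkeeping point --- the only place where this differs substantively from the triangular case --- is the observation that each row of the corollary has the layout $A\geq B\leq C$ rather than a transitive chain, because the ``third'' comparison in each row is precisely the one controlled by the indeterminate pair $(\Delta_{135},\Delta_{245})$. The only thing to check, for each of the five choices of $h$, is that the two comparisons actually listed match the two determinate pairs; this is a short but not entirely automatic verification, and is expected to be the chief obstacle.
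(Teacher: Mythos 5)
Your proposal is correct and follows essentially the same route as the paper's proof: fix the orientation by $\Delta_{123}>0$, read off the signs $\Delta_{124},\Delta_{134},\Delta_{234}>0$ and $\Delta_{125},\Delta_{235},\Delta_{345},\Delta_{154}\geq 0$ while leaving $\Delta_{135}$, $\Delta_{245}$ undetermined, and combine identities (\ref{EQW}) with $\nu>0$ from Proposition \ref{proquad}, with the same strictness argument at the end. Your explicit bookkeeping check that, for each missing index $h$, the omitted comparison is precisely the one governed by $\Delta_{135}$ or $\Delta_{245}$ is a correct and slightly more detailed rendering of what the paper leaves implicit.
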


\begin{proof} 
For the central configurations in Corollary \ref{lemquad}, we assume $\Delta_{123}>0$, then from Figure \ref{quad}, we have $\Delta_{124}$, $\Delta_{134}$, $\Delta_{234}>0$, $\Delta_{125}$, $\Delta_{235}$, $\Delta_{154}$, $\Delta_{345}\geq 0$. Here we accept $\mathrm{q}_5$ on a diagonal of the quadrilateral, so we do not decide the signs of $\Delta_{135}$ and $\Delta_{245}$. Williams' inequality (\ref{l}) and identities (\ref{EQW}) imply the inequalities in Corollary \ref{lemquad}. Not all the $\Delta_{ijh}\Delta_{klh}$'s are zero, so (\ref{l}) implies that at least one inequality is strict. If all the triangular areas are nonzero, all the inequalities are strict (and indeed we deduce another five strict inequalities from the signs of $\Delta_{135}$ and $\Delta_{245}$).
\end{proof} 

\begin{proof}[Proof of Proposition \ref{proquad}] Again, we assume $\nu\leq 0$, adopt the numbering convention of Corollary \ref{lemquad} and deduce the same inequalities in reverse order. The fifth one in reverse order is
\begin{equation}\label{g6}
\Ss_{13}\Ss_{24}\leq\Ss_{12}\Ss_{34}
\end{equation}
and
\begin{equation}\label{g7}
\Ss_{13}\Ss_{24}\leq\Ss_{14}\Ss_{23}.
\end{equation}
According to Corollary \ref{cor2}, $\Ss_{13}\Ss_{24}>0$ since both factors are negative. Consider the two shortest exterior sides.  We have two cases: either they are adjacent or they are opposite.

In the adjacent case, the shortest sides are for example $\overline{\mathrm{q}_1\mathrm{q}_2}$ and $\overline{\mathrm{q}_2\mathrm{q}_3}$. According to  Proposition \ref{prop2} (2), $r_{13}>r_{12}$ and $r_{13}>r_{23}$. The same proposition also gives $r_{24}>r_{14}$ or $r_{24}>r_{34}$. We can assume without loss of generality that the third shortest side is $\overline{\mathrm{q}_1\mathrm{q}_4}$. Then $r_{24}>r_{14}$, $\Ss_{13}<\Ss_{23}$, $\Ss_{24}<\Ss_{14}$ and (\ref{g7}) cannot be true if $\Ss_{23}<0$ and $\Ss_{14}<0$. As their product is positive, $\Ss_{23}>0$ and $\Ss_{14}>0$. Now, $\Ss_{12}>0$ since $\overline{\mathrm{q}_1\mathrm{q}_2}$ is shorter than $\overline{\mathrm{q}_1\mathrm{q}_4}$. Finally, $\Ss_{34}>0$ by (\ref{g6}).

In the opposite case, the shortest sides are for example $\overline{\mathrm{q}_1\mathrm{q}_2}$ and $\overline{\mathrm{q}_3\mathrm{q}_4}$.  According to  Proposition \ref{prop2} (2), $r_{13}>r_{12}$, $r_{13}>r_{34}$, $r_{24}>r_{12}$ and $r_{24}>r_{34}$. Consequently (\ref{g6}) is impossible except maybe if $\Ss_{12}>0$ and $\Ss_{34}>0$.

So in both cases we conclude that two opposite exterior sides, for example $\overline{\mathrm{q}_1\mathrm{q}_2}$ and $\overline{\mathrm{q}_3\mathrm{q}_4}$, are ``short'', in the sense that $\Ss_{12}>0$ and $\Ss_{34}>0$. But the fourth inequality, left, in reverse order is
\begin{equation}\label{g8}
\Ss_{12}\Ss_{35}\leq \Ss_{13}\Ss_{25}.
\end{equation}
From it we deduce that the $\Ss_{i5}$, $i=1,\dots,4$, cannot be all positive. Indeed, if $\Ss_{35}>0$, then $\Ss_{25}<0$.

Let us first suppose that $\Ss_{15}\leq 0$. Then by Proposition \ref{DSTtm} applied to the point $\mathrm{q}_1$ and the line $\mathrm{q}_1\mathrm{q}_2$, we have $\Ss_{14}>0$. Then (\ref{g7}) gives $\Ss_{23}>0$ and all the exterior sides are ``short''. But $\Ss_{15}\leq 0$ so $\overline{\mathrm{q}_1\mathrm{q}_5}$ is not short. Proposition \ref{exter} tells that $\mathrm{q}_5$ is strictly exterior to the triangle $\mathrm{q}_1\mathrm{q}_2\mathrm{q}_4$.

If we had moreover $\Ss_{35}\leq 0$, $\mathrm{q}_5$ would also be strictly exterior to the triangle $\mathrm{q}_2\mathrm{q}_3\mathrm{q}_4$, which is impossible. So $\Ss_{35}>0$. But (\ref{g8}) then gives $\Ss_{25}<0$, while $\Ss_{14}\Ss_{35}\leq \Ss_{13}\Ss_{45}$, which is the second inequality, right, in reverse order, gives $\Ss_{45}<0$. But $\Ss_{25}<0$ and $\Ss_{45}<0$ give the same contradiction as $\Ss_{15}\leq 0$ and $\Ss_{35}\leq 0$.

 If instead of $\Ss_{15}\leq 0$ we suppose that $\Ss_{25}\leq 0$, $\Ss_{35}\leq 0$ or $\Ss_{45}\leq 0$, the reasoning follows the same steps with some changes in the indices. If for example $\Ss_{25}<0$, we exchange indices 1 and 2 and indices 3 and 4. So, we use, in reverse order, the third inequality, right instead of the fourth, left, etc. We get the same contradiction in all the cases. So $\nu>0$.

\end{proof}

{\it Acknowledgements.} We wish to thank Christian Marchal and Kuo-Chang Chen for sharing important ideas and interesting information.
Jiexin Sun acknowledges the support of the China Scholarship Council program (Project ID: 202306220174).

\end{document}